\setlist[enumerate,1]{label={\upshape(\roman*)}}
\DeclareMathOperator*{\esssup}{ess\,sup}
\DeclareMathOperator*{\essinf}{ess\,inf}
\theoremstyle{plain}
\newtheorem{theorem}{Proposition}
\newtheorem{lemma}{Lemma}
\theoremstyle{definition}
\newtheorem{example}{Example}
\begin{document}
	\title{Optimal measure preserving derivatives revisited}
	\author{Brendan K.\ Beare}
	\affil{School of Economics, University of Sydney}
	\maketitle

\begin{center}	
Accepted for publication in \emph{Mathematical Finance}.
\end{center}

\begin{abstract}
	This article clarifies the relationship between pricing kernel monotonicity and the existence of opportunities for stochastic arbitrage in a complete and frictionless market of derivative securities written on a market portfolio. The relationship depends on whether the payoff distribution of the market portfolio satisfies a technical condition called adequacy, meaning that it is atomless or is comprised of finitely many equally probable atoms. Under adequacy, pricing kernel nonmonotonicity is equivalent to the existence of a strong form of stochastic arbitrage involving distributional replication of the market portfolio at a lower price. If the adequacy condition is dropped then this equivalence no longer holds, but pricing kernel nonmonotonicity remains equivalent to the existence of a weaker form of stochastic arbitrage involving second-order stochastic dominance of the market portfolio at a lower price. A generalization of the optimal measure preserving derivative is obtained which achieves distributional replication at the minimum cost of all second-order stochastically dominant securities under adequacy.
\end{abstract}

\onehalfspacing

\section{Introduction}\label{sec:intro}

In \citet{Beare11} I considered the problem of selecting a derivative security written on a market portfolio that achieves the minimum price among all derivative securities delivering the same payoff distribution as the market portfolio. The setting is based on the payoff distribution pricing model of \citet{Dybvig88}, and may be briefly summarized as follows. The sole source of uncertainty is the value of the market portfolio at the end of one period (this is a single period model), which we take to be a nonnegative and finite random variable $X$ governed by a probability measure $\mu$. Let $\Theta$ denote the collection of all Borel measurable maps from $\mathbb R_+$ to $\bar{\mathbb R}_+$ that are finite $\mu$-almost everywhere ($\mu$-a.e.). We regard any member of $\Theta$ as a derivative security available for purchase; in this sense, the market of derivative securities is complete. Each $\theta\in\Theta$ may be purchased at a price $p(\theta)$ equal to its integral with respect to a finite ``risk neutral'' measure $\nu$, i.e.\ at price $p(\theta)=\int\theta\mathrm{d}\nu$, and delivers a payoff of $\theta(X)$ at the end of one period. There is no meaningful notion of market frictions such as bid-ask spreads or fixed transaction costs in our model. We assume mutual absolute continuity of $\mu$ and $\nu$ (while allowing both $\mu$ and $\nu$ to be potentially discrete) and refer to the Radon-Nikodym derivative $\pi=\mathrm{d}\nu/\mathrm{d}\mu$ as the \emph{pricing kernel}. Mutual absolutely continuity should be regarded as a no-arbitrage condition, as it rules out the possibility of purchasing a security at a zero price that pays off with positive probability, or purchasing a security at a positive price that pays off with zero probability. However, as we will see shortly, it leaves open the possibility of a form of stochastic arbitrage.

We are concerned with the problem of choosing $\theta\in\Theta$ to minimize the price $p(\theta)$ subject to a constraint requiring the distribution of $\theta(X)$ to weakly dominate the distribution of $X$ in some sense. Four such constraints will be relevant: distributional equality, first-order stochastic dominance, concave stochastic ordering, and second-order stochastic dominance. We denote these constraints respectively by $\theta(X)\sim X$, $\theta(X)\gtrsim_1 X$, $\theta(X)\gtrsim_\mathrm{cv}X$, and $\theta(X)\gtrsim_2 X$. Definitions of the latter three orderings are provided at the beginning of Section \ref{sec:results}. If there exists a derivative security $\theta\in\Theta$ satisfying one of these four constraints while having price $p(\theta)$ less than the price of the market portfolio, which is $\int x\mathrm{d}\nu(x)$, then we regard this as a form of stochastic arbitrage.

In \citet{Dybvig88} it is assumed that the market portfolio payoff distribution $\mu$ is either atomless, or is comprised of a finite number of equally probable atoms. This assumption turns out to be important for distinguishing between the existence of different forms of stochastic arbitrage. Following the mathematical literature on function rearrangement \citep{Luxemburg67,Day70,ChongRice71} we refer to this property of $\mu$ as \emph{adequacy}. Let $F$ and $Q$ denote the (cumulative) distribution function and quantile function of the random variable $X$, and for $\theta\in\Theta$ let $F_\theta$ and $Q_\theta$ denote the distribution function and quantile function of the random variable $\theta(X)$. Under adequacy of $\mu$, \citet{Dybvig88} showed that the minimum constrained price
\begin{align}\label{eq:opteq}
	&\inf\{p(\theta):\theta\in\Theta,\,\theta(X)\sim X\},
\end{align}
called the distributional price of $X$, is equal to
\begin{align}\label{eq:Dybvigprice}
	&\int_0^1Q(u)Q_\pi(1-u)\mathrm{d}u.
\end{align}
Under atomlessness of $\mu$, and assuming that $\pi$ is not constant on any set of positive $\mu$-measure, I showed in \citet{Beare11} that the minimum constrained price is achieved by the derivative security
\begin{align}\label{eq:ompdold}
	\vartheta(x)&=Q(1-F_\pi(\pi(x))),
\end{align}
which I termed the \emph{optimal measure preserving derivative}.

We will say that the pricing kernel $\pi$ is \emph{monotone} if it is equal $\mu$-a.e.\ to a nonincreasing function. \citet{Dybvig88} showed under adequacy of $\mu$ that the distributional price of the market portfolio, given by \eqref{eq:Dybvigprice}, is less than the actual price of the market portfolio, given by $\int x\mathrm{d}\nu(x)$, if and only if the pricing kernel $\pi$ is not monotone. This result neatly characterizes the existence of stochastic arbitrage opportunities defined in terms of distributional equality when $\mu$ is adequate. The primary goal of this article is to clarify the extent to which it carries over to forms of stochastic arbitrage defined in terms of the orderings $\gtrsim_1$, $\gtrsim_\mathrm{cv}$ and $\gtrsim_2$, and the extent to which it depends on the adequacy of $\mu$.

Two recent articles motivate my pursuit of this goal. The first is \citet{PostLongarela21}. Theorem 1 therein establishes, in a general setting permitting an incomplete options market and nonzero bid-ask spreads for option prices, that a necessary and sufficient condition for the existence of a stochastic arbitrage opportunity defined in the sense of second-order stochastic dominance is the nonexistence of a monotone pricing kernel implying option prices within the observed bid-ask ranges. Specialized to our simpler setting of a complete and frictionless options market, the result may be stated as follows: If $\mu$ assigns all mass to a finite collection of points, then
\begin{align}\label{eq:opt2}
	&\inf\{p(\theta):\theta\in\Theta,\,\theta(X)\gtrsim_2X\}
\end{align}
is less than the market portfolio price $\int x\mathrm{d}\nu(x)$ if and only if the pricing kernel $\pi$ is not monotone. This is puzzling because the constraint in \eqref{eq:opt2} is weaker than the constraint in \eqref{eq:opteq}, and we know from \citet{Dybvig88} that the minimum constrained price in \eqref{eq:opteq} is less than the market portfolio price if and only if the pricing kernel is not monotone. Two questions naturally arise. The first is the extent to which the distinction between the two characterizations of pricing kernel monotonicity hinges on the differing conditions -- adequacy in \citet{Dybvig88} and finite support in \citet{PostLongarela21} -- placed on $\mu$. Neither of these conditions on $\mu$ implies the other. The second question is whether the constrained minimum in \eqref{eq:opt2} is attained by the optimal measure preserving derivative in \eqref{eq:ompdold}.

Propositions \ref{thm:pkSSD} and \ref{thm:adequate} in this article provide answers to these two questions. Proposition \ref{thm:pkSSD} shows, without imposing any condition on $\mu$ (beyond the no-arbitrage condition of mutual absolute continuity of $\mu$ and $\nu$), that the minimum constrained price in \eqref{eq:opt2} is less than the market portfolio price if and only if the pricing kernel is not monotone, and also that the minimum constrained price
\begin{align}\label{eq:optcv}
	&\inf\{p(\theta):\theta\in\Theta,\,\theta(X)\gtrsim_\mathrm{cv}X\}
\end{align}
is less than the market portfolio price if and only if the pricing kernel is not monotone. On the other hand, it is not true in general that the minimum constrained price in \eqref{eq:opteq} is less than the market portfolio price if and only if the pricing kernel is not monotone, and neither is it true in general that
\begin{align}\label{eq:opt1}
	&\inf\{p(\theta):\theta\in\Theta,\,\theta(X)\gtrsim_1X\}
\end{align}
is less than the market portfolio price if and only if the pricing kernel is not monotone. Under the assumption that $\mu$ is adequate, Proposition \ref{thm:adequate} shows that the minimum constrained prices in \eqref{eq:opteq}, \eqref{eq:opt2}, \eqref{eq:optcv} and \eqref{eq:opt1} are all equal to one another, are all less than the market portfolio price if and only if the pricing kernel is not monotone, and are all attained by a generalization of the optimal measure preserving derivative in \eqref{eq:ompdold}. That generalization takes the special form in \eqref{eq:ompdold} under the additional requirements that $\mu$ is atomless and that the pricing kernel $\pi$ is not constant on any set of positive $\mu$-measure, as imposed in \citet{Beare11}.

The second recent article motivating this one is \citet{KleinerMoldovanuStrack21}. The most relevant part is Section S.2.2 in the supplementary material. It is shown there that if $\mu$ is atomless, if $\pi$ is $\mu$-essentially bounded, and if $X$ is integrable, then the minimum constrained prices in \eqref{eq:opteq} and in \eqref{eq:optcv} are equal to one another, and are less than the market portfolio price if and only if the pricing kernel is not monotone. This result is contained in Proposition \ref{thm:adequate} given here, which relaxes atomlessness to adequacy and drops the essential boundedness and integrability conditions. The proof given by \citet{KleinerMoldovanuStrack21} relies upon the Hardy-Littlewood rearrangement inequality as in \citet{CarlierDana05} and in \citet{Beare11}, and also upon an inequality of \citet{FanLorentz54}, which they apply fruitfully to a range of problems beyond the one discussed here. The role played by the Fan-Lorentz inequality is replaced in the proofs of Propositions \ref{thm:pkSSD} and \ref{thm:adequate} given here by an older inequality due to Hardy.

The property of adequacy of a finite measure $\mu$ was originally defined in \citet{Luxemburg67} to mean that, given any nonnegative measurable functions $f_1$ and $f_2$, the set $\{\int f_1f_2'\mathrm{d}\mu:f_2'\sim f_2\}$ has a maximal element equal to the upper bound for $\int f_1f_2\mathrm{d}\mu$ given by the Hardy-Littlewood rearrangement inequality. It was observed that if a finite measure $\mu$ is atomless or is comprised of finitely many atoms of equal measure then it is adequate. It was subsequently shown in \citet{Day70} that a finite measure $\mu$ is adequate if and only if it is atomless or is comprised of finitely many atoms of equal measure. I take this simple characterization of adequacy to be its definition.

Specifications of $\mu$ in empirical applications which do not satisfy adequacy are common. For instance, the binomial option pricing model in \citet{CoxRossRubinstein79} specifies $\mu$ to be the distribution of an exponentially transformed binomial random variable, violating adequacy. Estimates of $\mu$ obtained by empirical likelihood methods, as in e.g.\ \citet{PostKarabatiArvanitis18}, will also not generally satisfy adequacy. On the other hand, specifications of $\mu$ not satisfying adequacy may nevertheless closely approximate smooth distributions, as in the case of the binomial approximation to the normal distribution. This raises the question of whether the claims established under adequacy in Proposition \ref{thm:adequate} are informative when $\mu$ is inadequate but closely approximates a suitably smooth adequate measure. We present some limited evidence in support of an affirmative answer to this question in a numerical illustration reported in Section \ref{sec:illustration}. The illustration involves a pair of sequences of inadequate measures $\mu_n$ and $\nu_n$ concentrated on $n$ atoms. The sequences are constructed in such a way that, as $n$ increases, the measures $\mu_n$ and $\nu_n$ more closely approximate atomless measures $\mu$ and $\nu$ corresponding to the estimated objective and risk-neutral distributions of monthly S\&P500 returns reported in \citet{Jackwerth00}. For each pair of measures $\mu_n$ and $\nu_n$ we use linear and mixed-integer linear programs to compute the price minimizing derivatives constrained to satisfy first- or second-order stochastic dominance over the market portfolio. We find that, when $n$ is large, both price minimizing derivatives are closely approximated by the optimal measure preserving derivative corresponding to the atomless measures $\mu$ and $\nu$. When $n$ is small, we observe substantial differences between the two price minimizing derivatives, with only the derivative constrained to satisfy second-order stochastic dominance closely approximating the optimal measure preserving derivative for the atomless case.

A referee has drawn my attention to a recent article by \citet{MagnaniRabanalRudWang22} reporting the outcome of laboratory experiments in which participants make portfolio choices in a simple binomial tree model with monotone pricing kernel. In three rows of Table 2 therein, the percentage of experimental participants selecting portfolios which satisfy one of three notions of efficiency is reported. The three notions of efficiency are the nonexistence of a lower cost portfolio which stochastically dominates the selected portfolio, with stochastic dominance defined in the first-, second- or third-order sense. Even though the binomial tree model does not satisfy our adequacy condition, we see that in each experimental design the percentages of efficient portfolio choices are identical under either the first- or second-order sense of efficiency. (There is one experimental design in which the reported percentage of efficient portfolio choices is 47\% using the first-order sense of efficiency and 48\% using the second-order sense, but this must be a rounding error as the former percentage can be no less than the latter by construction.) On the other hand, the percentages of efficient portfolio choices using the third-order sense of efficiency are substantially lower than those for the first- and second-order senses. It may be useful to pursue an extension of the results obtained in this article to the case of third-order stochastic dominance. I comment further on this possibility in the final paragraph of Section \ref{sec:discussion}.

The remainder of this article is structured as follows. The new results, Propositions \ref{thm:pkSSD} and \ref{thm:adequate}, are presented in Section \ref{sec:results}. The numerical illustration involving the approximation of an adequate measure by a sequence of inadequate measures is presented in Section \ref{sec:illustration}. In Section \ref{sec:discussion} I discuss the implications of my results for recent research on stochastic arbitrage under pricing kernel nonmonotonicity. In Appendix \ref{sec:mathprelim} I review for the reader's convenience various results on distributions, quantiles, rearrangements, and disintegration of measures that are used in the proofs of Propositions \ref{thm:pkSSD} and \ref{thm:adequate}, given in Appendix \ref{sec:proofs}.

\section{Results}\label{sec:results}
We begin by defining the relations $\gtrsim_1$, $\gtrsim_\mathrm{cv}$ and $\gtrsim_2$. Let $X$ and $X'$ be nonnegative and finite random variables. We say that $X\gtrsim_1X'$ if $\mathrm{E}u(X)\geq\mathrm{E}u(X')$ for all nondecreasing maps $u:\mathbb{R}_+\to\mathbb{R}$ such that the expectations exist. We say that $X\gtrsim_\mathrm{cv}X'$ if $\mathrm{E}u(X)\geq\mathrm{E}u(X')$ for all concave maps $u:\mathbb{R}_+\to\mathbb{R}$ such that the expectations exist. We say that $X\gtrsim_2X'$ if $\mathrm{E}u(X)\geq\mathrm{E}u(X')$ for all nondecreasing and concave maps $u:\mathbb{R}_+\to\mathbb{R}$ such that the expectations exist. Table B.1 in \citet{Dybvig88} provides equivalent reformulations of these definitions in terms of distribution functions, quantile functions, and additive decompositions. Note that if $X$ and $X'$ have finite expectations then $X\gtrsim_\mathrm{cv}X'$ if and only if $\mathrm{E}X=\mathrm{E}X'$ and $X\gtrsim_2X'$; see e.g.\ Section 3.A in \citet{ShakedShanthikumar07}. The concave stochastic order thus carries the interpretation of mean-preserving spread.

Our first result is as follows.
\begin{theorem}\label{thm:pkSSD}
	If $\mu$ and $\nu$ are mutually absolutely continuous with Radon-Nikodym derivative $\pi=\mathrm{d}\nu/\mathrm{d}\mathrm{\mu}$, then the following three conditions are equivalent.
    \begin{enumerate}
    	\item $\inf\{p(\theta):\theta\in\Theta,\,\theta(X)\gtrsim_\mathrm{cv}X\}<\int x\mathrm{d}\nu(x)$.\label{pkSSD1}
    	\item $\inf\{p(\theta):\theta\in\Theta,\,\theta(X)\gtrsim_2X\}<\int x\mathrm{d}\nu(x)$.\label{pkSSD2}
    	\item $\pi$ is not monotone.\label{pkSSD3}
    \end{enumerate}
\end{theorem}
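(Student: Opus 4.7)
My plan is to prove the three conditions equivalent by cycling through \eqref{pkSSD1}$\Rightarrow$\eqref{pkSSD2}$\Rightarrow$\eqref{pkSSD3}$\Rightarrow$\eqref{pkSSD1}. The first implication is immediate: every nondecreasing concave function is concave, so $\theta(X)\gtrsim_\mathrm{cv}X$ implies $\theta(X)\gtrsim_2X$, the feasible set in \eqref{pkSSD1} sits inside that of \eqref{pkSSD2}, and the infimum in \eqref{pkSSD2} is therefore no larger.

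For \eqref{pkSSD2}$\Rightarrow$\eqref{pkSSD3} I would argue by contrapositive. Assume $\pi$ equals a nonincreasing function $\mu$-a.e. For any $\theta\in\Theta$ with $\theta(X)\gtrsim_2X$, the Hardy--Littlewood rearrangement inequality applied to the marginals of $\theta(X)$ and $\pi(X)$ yields
\[
p(\theta)=\mathrm E[\theta(X)\pi(X)]\ge\int_0^1 Q_\theta(u)Q_\pi(1-u)\,\mathrm du,
\]
and because $\pi$ is nonincreasing in $X$ the coupling $(X,\pi(X))$ is antimonotonic and attains the Fr\'echet--Hoeffding lower bound at $\theta=\mathrm{id}$: $\int x\,\mathrm d\nu=\int_0^1 Q(u)Q_\pi(1-u)\,\mathrm du$. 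Table B.1 of \citet{Dybvig88} recasts $\theta(X)\gtrsim_2X$ as $\int_0^t Q_\theta(u)\,\mathrm du\ge\int_0^t Q(u)\,\mathrm du$ for every $t\in(0,1]$, and since $h(u):=Q_\pi(1-u)$ is nonnegative and nonincreasing, Hardy's integral inequality (the tool the introduction singles out as replacing the Fan--Lorentz inequality) then delivers $\int_0^1 Q_\theta h\,\mathrm du\ge\int_0^1 Q h\,\mathrm du$ via integration by parts against $\mathrm dh$. Chaining these bounds gives $p(\theta)\ge\int x\,\mathrm d\nu$.

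For \eqref{pkSSD3}$\Rightarrow$\eqref{pkSSD1}, assume $\pi$ is not monotone, which is equivalent to $\mu\otimes\mu\{(x,y):x<y,\pi(x)<\pi(y)\}>0$. A Fubini argument on this set produces rationals $r<s$, reals $c_A<c_B$, and Borel sets $A\subseteq\{x\le c_A:\pi(x)\le r\}$ and $B\subseteq\{x\ge c_B:\pi(x)\ge s\}$ with $\mu(A),\mu(B)>0$. I would then take as candidate security $\theta(x)=m$ for $x\in A\cup B$ and $\theta(x)=x$ otherwise, where $m:=\mu(A\cup B)^{-1}\int_{A\cup B}y\,\mathrm d\mu(y)$. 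Jensen's inequality applied on $A\cup B$ gives $\theta(X)\gtrsim_\mathrm{cv}X$, and a direct computation produces
\[
p(\theta)-\int x\,\mathrm d\nu=\int_{A\cup B}(m-x)\pi\,\mathrm d\mu=-\mu(A\cup B)\,\mathrm{Cov}_{\tilde\mu}(X,\pi(X)),\qquad\tilde\mu:=\mu|_{A\cup B}/\mu(A\cup B).
\]
The law of total covariance splits $\mathrm{Cov}_{\tilde\mu}(X,\pi(X))$ into a strictly positive between-group term $q_Aq_B(m_B-m_A)(\pi_B-\pi_A)$ plus within-group pieces $q_A\mathrm{Cov}_A(X,\pi(X))+q_B\mathrm{Cov}_B(X,\pi(X))$. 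Shrinking $A$ and $B$---each to a single $\mu$-atom where one is available, otherwise to a Borel neighborhood of small diameter around a support point of the atomless part, with $\pi$ additionally restricted to a bounded range on $B$ (possible since $\pi$ is $\mu$-a.e.\ finite)---makes the within-group pieces arbitrarily small, leaves the between-group term dominant, and forces $p(\theta)<\int x\,\mathrm d\nu$. I expect this selection of $A$ and $B$ to be the main obstacle: handling a general $\mu$ with a mixed atomic/atomless structure requires ensuring that the within-group covariances cannot cancel the positive between-group gain.
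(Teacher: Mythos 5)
Your first two implications are sound and essentially coincide with the paper's argument: \ref{pkSSD1}$\Rightarrow$\ref{pkSSD2} is the trivial containment of feasible sets, and your contrapositive proof of \ref{pkSSD2}$\Rightarrow$\ref{pkSSD3} combines exactly the two ingredients the paper packages as Lemma \ref{lem:LowerBound} (Hardy--Littlewood plus the Hardy quantile majorization inequality applied to $g(u)=Q_\pi(1-u)$) with the countermonotonicity equality case of Lemma \ref{lem:HardyLittlewood}.

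The gap is in \ref{pkSSD3}$\Rightarrow$\ref{pkSSD1}, precisely where you flagged it. After the total-covariance decomposition you need
\begin{align*}
	q_A\,\mathrm{Cov}_A(X,\pi)+q_B\,\mathrm{Cov}_B(X,\pi)&>-q_Aq_B(m_B-m_A)(\pi_B-\pi_A),
\end{align*}
and ``making the within-group pieces arbitrarily small'' does not by itself suffice, because the between-group term is not bounded away from zero under the same shrinking: in the atomless case, reducing the diameter of $A$ or $B$ also reduces $\mu(A)$ or $\mu(B)$, hence moves $q_Aq_B$. Concretely, the bound $q_A|\mathrm{Cov}_A(X,\pi)|\leq q_A\,\mathrm{diam}(A)\,r$ must be beaten by $\tfrac12 q_Aq_B(c_B-c_A)(s-r)$, which requires $\mathrm{diam}(A)\lesssim q_B$ and, symmetrically, $\mathrm{diam}(B)\lesssim q_A$; each threshold depends on the measure of the \emph{other} set, which is only determined after that set has been shrunk, so the selection is circular as stated. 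The circle can be broken---for instance, atoms contribute zero within-group covariance and need no shrinking, and in the purely atomless case one can pass to subsets with $\mu(A)=\mu(B)$ so that $q_A=q_B=1/2$ and both thresholds become fixed constants---but you have not carried this out, and the mixed atomic/atomless case needs the same care. The paper sidesteps the issue entirely with a different perturbation, $\theta_\epsilon(x)=x+\epsilon/\mu(A)$ on $A$, $x-\epsilon/\mu(B)$ on $B$, and $x$ elsewhere: this transfers mean from $B$ to $A$ without disturbing the within-set joint law of $(X,\pi)$, so the price change is exactly $\epsilon\bigl(\nu(A)/\mu(A)-\nu(B)/\mu(B)\bigr)<0$ with no covariance terms to control, and $\theta_\epsilon(X)\gtrsim_\mathrm{cv}X$ follows from a short right-derivative estimate once $\epsilon$ is small enough that $\sup A+\epsilon/\mu(A)\leq\inf B-\epsilon/\mu(B)$. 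I recommend either adopting that construction or inserting the measure-equalization step explicitly.
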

The equivalence of conditions \ref{pkSSD2} and \ref{pkSSD3} in Proposition \ref{thm:pkSSD} is implied by Theorem 1 in \citet{PostLongarela21} under the additional requirement that $\mu$ has finite support. The equivalence of conditions \ref{pkSSD1} and \ref{pkSSD3} in Proposition \ref{thm:pkSSD} follows from the discussion in Section S.2.2 of the supplementary material to \citet{KleinerMoldovanuStrack21} under the additional requirements that $\mu$ is atomless, that $\pi$ is $\mu$-essentially bounded, and that $X$ is integrable.

Our second result is as follows.
\begin{theorem}\label{thm:adequate}
	If $\mu$ and $\nu$ are mutually absolutely continuous with Radon-Nikodym derivative $\pi=\mathrm{d}\nu/\mathrm{d}\mathrm{\mu}$, and if $\mu$ is adequate, then the following five conditions are equivalent.
	\begin{enumerate}
		\item $\inf\{p(\theta):\theta\in\Theta,\,\theta(X)\sim X\}<\int x\mathrm{d}\nu(x)$.\label{adequate1}
		\item $\inf\{p(\theta):\theta\in\Theta,\,\theta(X)\gtrsim_1X\}<\int x\mathrm{d}\nu(x)$.\label{adequate2}
		\item $\inf\{p(\theta):\theta\in\Theta,\,\theta(X)\gtrsim_\mathrm{cv}X\}<\int x\mathrm{d}\nu(x)$.\label{adequate3}
		\item $\inf\{p(\theta):\theta\in\Theta,\,\theta(X)\gtrsim_2X\}<\int x\mathrm{d}\nu(x)$.\label{adequate4}
		\item $\pi$ is not monotone.\label{adequate5}
	\end{enumerate}
    Moreover, the four infima in \ref{adequate1}, \ref{adequate2}, \ref{adequate3} and \ref{adequate4} are all equal to one another, and are all attained by the function
    \begin{align}\label{eq:ompd}
    	\vartheta(x)&=Q\Big(1-F_{\pi}(\pi(x))+\mu\{w:\pi(w)=\pi(x),w\leq x\}\Big).
    \end{align}
\end{theorem}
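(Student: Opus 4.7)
My plan is to show that the function $\vartheta$ defined in \eqref{eq:ompd} satisfies $\vartheta(X) \sim X$, that its price equals the Dybvig formula \eqref{eq:Dybvigprice}, and that no $\theta \in \Theta$ with $\theta(X) \gtrsim_2 X$ can be priced strictly below this value. Since the four constraints in \ref{adequate1}--\ref{adequate4} are nested (distributional equality being the strongest, $\gtrsim_2$ the weakest), combining these bounds sandwiches all four infima at the common value $p(\vartheta) = \int_0^1 Q(u) Q_\pi(1-u) \mathrm{d}u$ and identifies $\vartheta$ as a common minimizer. The equivalence of \ref{adequate1}--\ref{adequate4} with \ref{adequate5} then follows from Proposition \ref{thm:pkSSD}, or alternatively from the Hardy-Littlewood identity $\int x \mathrm{d}\nu(x) = \mathrm{E}[X \pi(X)] \geq \int_0^1 Q(u) Q_\pi(1-u) \mathrm{d}u$, whose equality case is precisely the antimonotonicity of $X$ and $\pi(X)$, i.e., monotonicity of $\pi$.

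Writing $\vartheta(x) = Q(1 - H(x))$ with $H(x) = F_\pi(\pi(x)) - \mu\{w : \pi(w) = \pi(x), w \leq x\}$, the key observation is that adequacy implies $H(X)$ is uniformly distributed on $[0,1]$ in the atomless case, and places mass $1/n$ on each point of $\{0, 1/n, \ldots, (n-1)/n\}$ in the $n$-equal-atom case. In the atomless case this follows from disintegration of $\mu$ along $\pi$: on each level set $\{\pi = y\}$ of positive $\mu$-measure, the map $x \mapsto \mu\{\pi = y, w \leq x\}$ is the probability integral transform of the conditional distribution, giving uniformity on $[0, \mu\{\pi = y\}]$, and the blocks $[F_\pi(y-), F_\pi(y)]$ tile $[0,1]$. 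In the discrete case, a direct enumeration across levels shows $H$ takes each grid value exactly once. Consequently $\vartheta(X) \sim X$. Moreover the construction makes $(H(X), \pi(X))$ comonotonic, so $\vartheta(X) = Q(1 - H(X))$ is antimonotonic with $\pi(X)$, and the Hardy-Littlewood inequality attains its lower bound, yielding $p(\vartheta) = \mathrm{E}[\vartheta(X) \pi(X)] = \int_0^1 Q(u) Q_\pi(1-u) \mathrm{d}u$, which coincides with \eqref{eq:Dybvigprice}.

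For the lower bound, fix $\theta \in \Theta$ with $\theta(X) \gtrsim_2 X$. The Hardy-Littlewood inequality yields $p(\theta) \geq \int_0^1 Q_\theta(u) Q_\pi(1-u) \mathrm{d}u$, so it suffices to show $\int_0^1 Q_\theta(u) Q_\pi(1-u) \mathrm{d}u \geq \int_0^1 Q(u) Q_\pi(1-u) \mathrm{d}u$. After the substitution $v = 1-u$, the condition $\theta(X) \gtrsim_2 X$ reads $B(s) := \int_s^1 [Q_\theta(1-v) - Q(1-v)] \mathrm{d}v \geq 0$ for all $s \in [0,1]$, and since $Q_\pi$ is nondecreasing, integration by parts gives
\begin{equation*}
\int_0^1 [Q_\theta(1-v) - Q(1-v)] Q_\pi(v) \mathrm{d}v = B(0) Q_\pi(0) + \int_{[0,1]} B(s) \, \mathrm{d}Q_\pi(s) \geq 0,
\end{equation*}
which is the appeal to Hardy's inequality that replaces the Fan-Lorentz step used by \citet{KleinerMoldovanuStrack21}. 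The principal obstacle is verifying uniformity of $H(X)$ under adequacy, which requires careful disintegration along level sets of $\pi$ in the atomless case and explicit accounting of atoms in the discrete case; a two-atom example with unequal masses shows that the construction fails without adequacy.
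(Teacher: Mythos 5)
Your proposal is correct and follows essentially the same route as the paper: the same sandwich argument showing that $\vartheta$ satisfies $\vartheta(X)\sim X$ (via disintegration of $\mu$ along the level sets of $\pi$ in the atomless case and a permutation argument in the equal-atoms case) and is countermonotonic with $\pi$, combined with the Hardy--Littlewood and Hardy quantile majorization inequalities to pin all four infima at $\int_0^1 Q(u)Q_\pi(1-u)\,\mathrm{d}u$, with the equivalence to pricing kernel nonmonotonicity imported from Proposition \ref{thm:pkSSD}. The only cosmetic differences are that you verify distributional equality by showing $H(X)$ is uniform rather than computing $\mu A_y=F(y)$ directly, and that you prove the quantile majorization step by integration by parts against $Q_\pi$ rather than citing it (where you should be slightly careful with the boundary term when $Q_\pi(1)=\infty$, a technicality the paper's Lemma \ref{lem:Hardy} handles via monotone convergence).
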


The equivalence of conditions \ref{adequate1} and \ref{adequate5} in Proposition \ref{thm:adequate} is shown in \citet{Dybvig88}. The equivalence of conditions \ref{adequate1}, \ref{adequate3}, \ref{adequate4} and \ref{adequate5} in Proposition \ref{thm:adequate} therefore follows from Proposition \ref{thm:pkSSD}. Condition \ref{adequate2} joins this group of equivalent conditions due to the fact that the relation $\gtrsim_1$ is implied by the relation $\sim$ and implies the relation $\gtrsim_2$.

The function $\vartheta$ in Proposition \ref{thm:adequate} is a more general version of the optimal measure preserving derivative introduced in \citet{Beare11} and stated above in \eqref{eq:ompdold}. The source of additional generality is the term $\mu\{w:\pi(w)=\pi(x),w\leq x\}$, which is equal to zero under the specificity that $\mu$ is atomless and $\pi$ is not constant on any set of positive $\mu$-measure, as imposed in \citet{Beare11}. Since $\gtrsim_2$ is the weakest and $\sim$ the strongest of the four relations $\sim$, $\gtrsim_1$, $\gtrsim_\mathrm{cv}$ and $\gtrsim_2$, to show that $\vartheta$ attains the infima in \ref{adequate1}, \ref{adequate2}, \ref{adequate3} and \ref{adequate4} it suffices to show that $\vartheta$ attains the infimum in \ref{adequate4} while also satisfying $\vartheta(X)\sim X$. Our proof that $\vartheta(X)\sim X$ extends the arguments in \citet{Beare11} by relaxing atomlessness to adequacy and dropping the nonflatness condition on $\pi$. The latter is achieved by applying the disintegration theorem discussed in \citet{ChangPollard97}, which provides a convenient way to handle the conditioning of probability measures on level sets of random variables.

I now present two examples that demonstrate the pivotal role played by the adequacy condition on $\mu$ in Proposition \ref{thm:adequate}. Both examples are very simple and involve a measure $\mu$ that assigns all mass to two atoms. The first example shows that \ref{adequate1} and \ref{adequate2} in Proposition \ref{thm:adequate} need not be equivalent if $\mu$ is not adequate, and that $\vartheta$ need not satisfy $\vartheta(X)\sim X$ if $\mu$ is not adequate.
\begin{example}\label{ex1}
	Suppose that $\{1\}$ and $\{2\}$ are atoms of $\mu$, with $\mu\{1\}=2/3$ and $\mu\{2\}=1/3$. Suppose further that $\nu$ is a probability measure with $\nu\{1\}=1/3$ and $\nu\{2\}=2/3$, so that $\int x\mathrm{d}\nu(x)=5/3$. For any $\theta\in\Theta$, the distribution of $\theta(X)$ assigns mass $2/3$ to the atom $\{\theta(1)\}$ and mass $1/3$ to the atom $\{\theta(2)\}$. We therefore have $\theta(X)\sim X$ if and only if $\theta(1)=1$ and $\theta(2)=2$, implying that condition \ref{adequate1} in Proposition \ref{thm:adequate} is not satisfied. On the other hand, any function $\theta\in\Theta$ with $\theta(1)=2$ and $\theta(2)=1$ satisfies $\theta(X)\gtrsim_1X$ and $p(\theta)=4/3<5/3$, so condition \ref{adequate2} in Proposition \ref{thm:adequate} is satisfied. Elementary calculations show that the function $\vartheta$ in Proposition \ref{thm:adequate} takes the values $\vartheta(1)=2$ and $\vartheta(2)=1$, so $\vartheta$ satisfies $\vartheta(X)\gtrsim_1X$ but not $\vartheta(X)\sim X$, and has price $p(\vartheta)=4/3$.
\end{example}

The second example shows that neither \ref{adequate1} nor \ref{adequate2} in Proposition \ref{thm:adequate} need be equivalent to \ref{adequate3}, \ref{adequate4} or \ref{adequate5} if $\mu$ is not adequate, and that $\vartheta$ may have price exceeding that of the market portfolio if $\mu$ is not adequate.
\begin{example}\label{ex2}
	Suppose that $\{1\}$ and $\{2\}$ are atoms of $\mu$, with $\mu\{1\}=1/3$ and $\mu\{2\}=2/3$. Suppose further that $\nu$ is a probability measure with $\nu\{1\}=1/5$ and $\nu\{2\}=4/5$, so that $\int x\mathrm{d}\nu(x)=9/5$. For any $\theta\in\Theta$, the distribution of $\theta(X)$ assigns mass $1/3$ to the atom $\{\theta(1)\}$ and mass $2/3$ to the atom $\{\theta(2)\}$. We therefore have $\theta(X)\sim X$ if and only if $\theta(1)=1$ and $\theta(2)=2$, implying that condition \ref{adequate1} in Proposition \ref{thm:adequate} is not satisfied. Furthermore, we have $\theta(X)\gtrsim_1X$ if and only if $\theta(1)\geq1$ and $\theta(2)\geq2$, implying that condition \ref{adequate2} in Proposition \ref{thm:adequate} is not satisfied. The pricing kernel $\pi$ satisfies $\pi(1)=3/5$ and $\pi(2)=6/5$, so $\pi$ is not monotone (i.e., not nonincreasing on $\{1,2\}$), and condition \ref{adequate5} in Proposition \ref{thm:adequate} is satisfied. Conditions \ref{adequate3} and \ref{adequate4} in Proposition \ref{thm:adequate} must therefore also be satisfied, by Proposition \ref{thm:pkSSD}. Elementary calculations show that the function $\vartheta$ in Proposition \ref{thm:adequate} takes the values $\vartheta(1)=\vartheta(2)=2$ and therefore satisfies $\vartheta(X)\gtrsim_1X$ but not $\vartheta(X)\sim X$, and has price $p(\vartheta)=2>9/5$.
\end{example}

\section{Numerical illustration}\label{sec:illustration}

To further illustrate our results we revisit the influential article by \citet{Jackwerth00} reporting an empirical violation of pricing kernel monotonicity with 31-day S\&P500 index options on April 15, 1992. The left panel in Figure \ref{fig:jackwerth} reproduces the estimated probability density functions for $\mu$ (solid line) and $\nu$ (dashed line) reported in Figure 2 in \citet{Jackwerth00}. The former density function was computed by applying a kernel density estimator to four years of past monthly S\&P500 returns, while the latter density function was estimated from prevailing option prices by applying a variation of the method of \citet{JackwerthRubinstein96}. The ratio of these density functions is the pricing kernel $\pi$, plotted in the center panel in Figure \ref{fig:jackwerth}. We see that $\pi$ is nonmonotone, exhibiting a pronounced increasing region around the center of the return distribution, and a smaller increasing region toward the left tail. The right panel in Figure \ref{fig:jackwerth} displays the optimal measure preserving derivative $\vartheta$ computed using the formula in \eqref{eq:ompdold} or in \eqref{eq:ompd}; the two formulae produce the same $\vartheta$ since $\pi$ does not have any flat regions. The plot of $\vartheta$ is identical to the one in Figure 4.3(b) in \citet{Beare11}. It deviates from the market payoff (the 45-degree line, dashed) due to the nonmonotone shape of $\pi$.

\begin{figure}
	\centering
		\begin{tikzpicture}
			\draw[thick, ->] (0,0) -- (3.5,0);
			\draw[thick, ->] (0,0) -- (0,3.3);
			\draw (0,0) -- ({-3pt},0) node[anchor=east,font=\scriptsize] {$0$};
			\draw (0,{3*(1/4)}) -- ({-3pt},{3*(1/4)}) node[anchor=east,font=\scriptsize] {$3$};
			\draw (0,{3*(2/4)}) -- ({-3pt},{3*(2/4)}) node[anchor=east,font=\scriptsize] {$6$};
			\draw (0,{3*(3/4)}) -- ({-3pt},{3*(3/4)}) node[anchor=east,font=\scriptsize] {$9$};
			\draw (0,{3*(4/4)}) -- ({-3pt},{3*(4/4)}) node[anchor=east,font=\scriptsize] {$12$};
			\draw (0,0) -- (0,{-3pt}) node[anchor=north,font=\scriptsize] {$.85$};
			\draw ({3.2*(1/4)},0) -- ({3.2*(1/4)},{-3pt}) node[anchor=north,font=\scriptsize] {$.925$};
			\draw ({3.2*(2/4)},0) -- ({3.2*(2/4)},{-3pt}) node[anchor=north,font=\scriptsize] {$1$};
			\draw ({3.2*(3/4)},0) -- ({3.2*(3/4)},{-3pt}) node[anchor=north,font=\scriptsize] {$1.075$};
			\draw ({3.2*(4/4)},0) -- ({3.2*(4/4)},{-3pt}) node[anchor=north,font=\scriptsize] {$1.15$};
			\node[rotate=90,font=\scriptsize] at (-.7,1.5) {Probability density};
			\node[font=\scriptsize] at (1.6,-.8) {Market payoff};
			\draw[blue,thick] plot file {physical.txt};
			\draw[red,thick,dashed] plot file {riskneutral.txt};
		\end{tikzpicture}
		\begin{tikzpicture}
			\draw[thick, ->] (0,0) -- (3.5,0);
			\draw[thick, ->] (0,0) -- (0,3.3);
			\draw (0,0) -- ({-3pt},0) node[anchor=east,font=\scriptsize] {$0$};
			\draw (0,{3*(1/4)}) -- ({-3pt},{3*(1/4)}) node[anchor=east,font=\scriptsize] {$.5$};
			\draw (0,{3*(2/4)}) -- ({-3pt},{3*(2/4)}) node[anchor=east,font=\scriptsize] {$1$};
			\draw (0,{3*(3/4)}) -- ({-3pt},{3*(3/4)}) node[anchor=east,font=\scriptsize] {$1.5$};
			\draw (0,{3*(4/4)}) -- ({-3pt},{3*(4/4)}) node[anchor=east,font=\scriptsize] {$2$};
			\draw (0,0) -- (0,{-3pt}) node[anchor=north,font=\scriptsize] {$.85$};
			\draw ({3.2*(1/4)},0) -- ({3.2*(1/4)},{-3pt}) node[anchor=north,font=\scriptsize] {$.925$};
			\draw ({3.2*(2/4)},0) -- ({3.2*(2/4)},{-3pt}) node[anchor=north,font=\scriptsize] {$1$};
			\draw ({3.2*(3/4)},0) -- ({3.2*(3/4)},{-3pt}) node[anchor=north,font=\scriptsize] {$1.075$};
			\draw ({3.2*(4/4)},0) -- ({3.2*(4/4)},{-3pt}) node[anchor=north,font=\scriptsize] {$1.15$};
			\node[rotate=90,font=\scriptsize] at (-.85,1.5) {Pricing kernel};
			\node[font=\scriptsize] at (1.6,-.8) {Market payoff};
			\draw[blue,thick] plot file {pricingkernel.txt};
		\end{tikzpicture}
		\begin{tikzpicture}
			\draw[thick, ->] (0,0) -- (3.5,0);
			\draw[thick, ->] (0,0) -- (0,3.3);
			\draw (0,0) -- ({-3pt},0) node[anchor=east,font=\scriptsize] {$.85$};
			\draw (0,{3*(1/4)}) -- ({-3pt},{3*(1/4)}) node[anchor=east,font=\scriptsize] {$.925$};
			\draw (0,{3*(2/4)}) -- ({-3pt},{3*(2/4)}) node[anchor=east,font=\scriptsize] {$1$};
			\draw (0,{3*(3/4)}) -- ({-3pt},{3*(3/4)}) node[anchor=east,font=\scriptsize] {$1.075$};
			\draw (0,{3*(4/4)}) -- ({-3pt},{3*(4/4)}) node[anchor=east,font=\scriptsize] {$1.15$};
			\draw (0,0) -- (0,{-3pt}) node[anchor=north,font=\scriptsize] {$.85$};
			\draw ({3.2*(1/4)},0) -- ({3.2*(1/4)},{-3pt}) node[anchor=north,font=\scriptsize] {$.925$};
			\draw ({3.2*(2/4)},0) -- ({3.2*(2/4)},{-3pt}) node[anchor=north,font=\scriptsize] {$1$};
			\draw ({3.2*(3/4)},0) -- ({3.2*(3/4)},{-3pt}) node[anchor=north,font=\scriptsize] {$1.075$};
			\draw ({3.2*(4/4)},0) -- ({3.2*(4/4)},{-3pt}) node[anchor=north,font=\scriptsize] {$1.15$};
			\node[rotate=90,font=\scriptsize] at (-1.15,1.5) {Derivative payoff};
			\node[font=\scriptsize] at (1.6,-.8) {Market payoff};
			\draw[red,dashed] (0,0) -- ({((1.11-.85)/(1.15-.85))*3.2},{((1.11-.85)/(1.15-.85))*3});
			\draw[blue,thick] plot file {ompd.txt};
		\end{tikzpicture}	
	\caption{Estimated probability densities for $\mu$ (solid) and $\nu$ (dashed) reported in \citet{Jackwerth00}, and the associated pricing kernel and optimal measure preserving derivative.}\label{fig:jackwerth}
\end{figure}
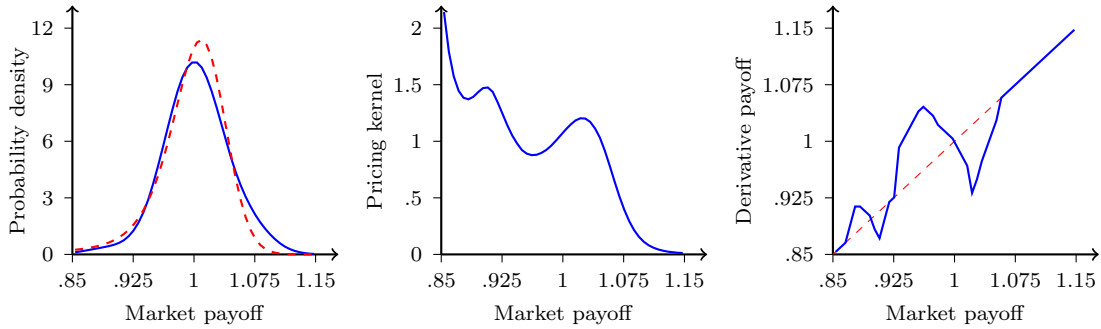

The probability measure $\mu$ represented by the density function plotted in the left panel in Figure \ref{fig:jackwerth} is atomless by construction, and therefore adequate. We therefore know from Proposition \ref{thm:adequate} that the optimal measure preserving derivative $\vartheta$ plotted in the right panel in Figure \ref{fig:jackwerth} minimizes the price $p(\theta)$ over $\theta\in\Theta$ subject to any of the constraints $\theta(X)\sim X$, $\theta(X)\gtrsim_1X$, $\theta(X)\gtrsim_\mathrm{cv}X$ and $\theta(X)\gtrsim_2X$. On the other hand, if $\mu$ was not adequate then these four constraints could potentially give rise to different price minimizing derivatives. To investigate this possibility further we replace $\mu$ with a discrete approximation to $\mu$ not satisfying the adequacy condition. Specifically, we partition the interval $(.85,1.15)$ into $n$ subintervals of equal length, and construct a discrete probability measure $\mu_n$ with an atom at the centre of each subinterval, that atom having mass equal to the probability assigned by $\mu$ to the subinterval. (The very small amount of probability assigned by $\mu$ to values less than .85 or greater than 1.15 is assigned to the atoms in the extreme left and right subintervals.) The top three panels in Figure \ref{fig:illustration} display the probability mass function for $\mu_n$ with $n=5,10,20$ (left, center, right). The mass functions are scaled to be directly comparable to the probability density function for $\mu$, displayed as a dotted line.

In order to suitably define the prices of derivatives when the market payoff is distributed according to $\mu_n$ we require a discrete approximation $\nu_n$ to $\nu$ which concentrates on the atoms of $\mu_n$. We choose this approximation to preserve the pricing kernel $\pi$ plotted in the center panel in Figure \ref{fig:jackwerth}. That is, we set $\nu_n\{x_i\}=\pi(x_i)\mu_n\{x_i\}$ for each $i=1,\dots,n$, where $x_1,\dots,x_n$ are the points to which $\mu_n$ assigns positive mass. The corresponding price of a derivative $\theta\in\Theta$ is then given by $p_n(\theta)=\sum_{i=1}^n\nu_n\{x_i\}\theta(x_i)$.

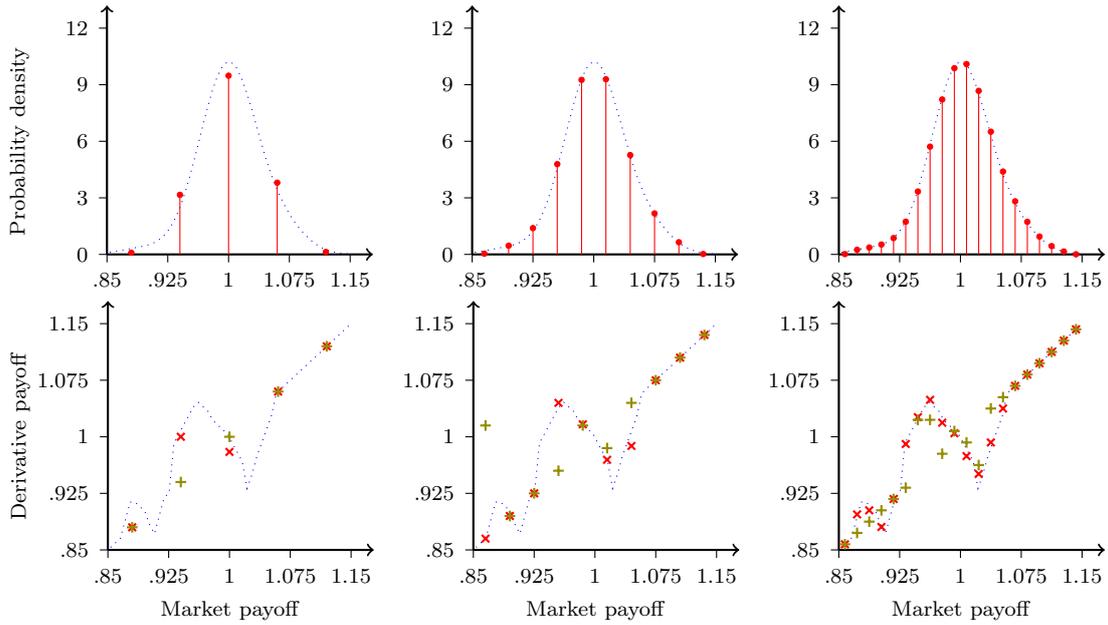
\begin{figure}
	\flushright
		\begin{tikzpicture}
			\draw[thick, ->] (0,0) -- (3.5,0);
			\draw[thick, ->] (0,0) -- (0,3.3);
			\draw (0,0) -- ({-3pt},0) node[anchor=east,font=\scriptsize] {$0$};
			\draw (0,{3*(1/4)}) -- ({-3pt},{3*(1/4)}) node[anchor=east,font=\scriptsize] {$3$};
			\draw (0,{3*(2/4)}) -- ({-3pt},{3*(2/4)}) node[anchor=east,font=\scriptsize] {$6$};
			\draw (0,{3*(3/4)}) -- ({-3pt},{3*(3/4)}) node[anchor=east,font=\scriptsize] {$9$};
			\draw (0,{3*(4/4)}) -- ({-3pt},{3*(4/4)}) node[anchor=east,font=\scriptsize] {$12$};
			\draw (0,0) -- (0,{-3pt}) node[anchor=north,font=\scriptsize] {$.85$};
			\draw ({3.2*(1/4)},0) -- ({3.2*(1/4)},{-3pt}) node[anchor=north,font=\scriptsize] {$.925$};
			\draw ({3.2*(2/4)},0) -- ({3.2*(2/4)},{-3pt}) node[anchor=north,font=\scriptsize] {$1$};
			\draw ({3.2*(3/4)},0) -- ({3.2*(3/4)},{-3pt}) node[anchor=north,font=\scriptsize] {$1.075$};
			\draw ({3.2*(4/4)},0) -- ({3.2*(4/4)},{-3pt}) node[anchor=north,font=\scriptsize] {$1.15$};
			\node[rotate=90,font=\scriptsize] at (-1.15,1.5) {Probability density};
			\draw[blue,dotted] plot file {physical.txt};
			\draw[red,only marks] plot[ycomb,mark=*,mark size=1pt] file {phys5.txt};
		\end{tikzpicture}
	    \hspace{.24cm}
		\begin{tikzpicture}
			\draw[thick, ->] (0,0) -- (3.5,0);
			\draw[thick, ->] (0,0) -- (0,3.3);
			\draw (0,0) -- ({-3pt},0) node[anchor=east,font=\scriptsize] {$0$};
			\draw (0,{3*(1/4)}) -- ({-3pt},{3*(1/4)}) node[anchor=east,font=\scriptsize] {$3$};
			\draw (0,{3*(2/4)}) -- ({-3pt},{3*(2/4)}) node[anchor=east,font=\scriptsize] {$6$};
			\draw (0,{3*(3/4)}) -- ({-3pt},{3*(3/4)}) node[anchor=east,font=\scriptsize] {$9$};
			\draw (0,{3*(4/4)}) -- ({-3pt},{3*(4/4)}) node[anchor=east,font=\scriptsize] {$12$};
			\draw (0,0) -- (0,{-3pt}) node[anchor=north,font=\scriptsize] {$.85$};
			\draw ({3.2*(1/4)},0) -- ({3.2*(1/4)},{-3pt}) node[anchor=north,font=\scriptsize] {$.925$};
			\draw ({3.2*(2/4)},0) -- ({3.2*(2/4)},{-3pt}) node[anchor=north,font=\scriptsize] {$1$};
			\draw ({3.2*(3/4)},0) -- ({3.2*(3/4)},{-3pt}) node[anchor=north,font=\scriptsize] {$1.075$};
			\draw ({3.2*(4/4)},0) -- ({3.2*(4/4)},{-3pt}) node[anchor=north,font=\scriptsize] {$1.15$};
			\draw[blue,dotted] plot file {physical.txt};
			\draw[red,only marks] plot[ycomb,mark=*,mark size=1pt] file {phys10.txt};
		\end{tikzpicture}
	    \hspace{.26cm}
		\begin{tikzpicture}
			\draw[thick, ->] (0,0) -- (3.5,0);
			\draw[thick, ->] (0,0) -- (0,3.3);
			\draw (0,0) -- ({-3pt},0) node[anchor=east,font=\scriptsize] {$0$};
			\draw (0,{3*(1/4)}) -- ({-3pt},{3*(1/4)}) node[anchor=east,font=\scriptsize] {$3$};
			\draw (0,{3*(2/4)}) -- ({-3pt},{3*(2/4)}) node[anchor=east,font=\scriptsize] {$6$};
			\draw (0,{3*(3/4)}) -- ({-3pt},{3*(3/4)}) node[anchor=east,font=\scriptsize] {$9$};
			\draw (0,{3*(4/4)}) -- ({-3pt},{3*(4/4)}) node[anchor=east,font=\scriptsize] {$12$};
			\draw (0,0) -- (0,{-3pt}) node[anchor=north,font=\scriptsize] {$.85$};
			\draw ({3.2*(1/4)},0) -- ({3.2*(1/4)},{-3pt}) node[anchor=north,font=\scriptsize] {$.925$};
			\draw ({3.2*(2/4)},0) -- ({3.2*(2/4)},{-3pt}) node[anchor=north,font=\scriptsize] {$1$};
			\draw ({3.2*(3/4)},0) -- ({3.2*(3/4)},{-3pt}) node[anchor=north,font=\scriptsize] {$1.075$};
			\draw ({3.2*(4/4)},0) -- ({3.2*(4/4)},{-3pt}) node[anchor=north,font=\scriptsize] {$1.15$};
			\draw[blue,dotted] plot file {physical.txt};
			\draw[red,only marks] plot[ycomb,mark=*,mark size=1pt] file {phys20.txt};
		\end{tikzpicture}	
		\begin{tikzpicture}
			\draw[thick, ->] (0,0) -- (3.5,0);
			\draw[thick, ->] (0,0) -- (0,3.3);
			\draw (0,0) -- ({-3pt},0) node[anchor=east,font=\scriptsize] {$.85$};
			\draw (0,{3*(1/4)}) -- ({-3pt},{3*(1/4)}) node[anchor=east,font=\scriptsize] {$.925$};
			\draw (0,{3*(2/4)}) -- ({-3pt},{3*(2/4)}) node[anchor=east,font=\scriptsize] {$1$};
			\draw (0,{3*(3/4)}) -- ({-3pt},{3*(3/4)}) node[anchor=east,font=\scriptsize] {$1.075$};
			\draw (0,{3*(4/4)}) -- ({-3pt},{3*(4/4)}) node[anchor=east,font=\scriptsize] {$1.15$};
			\draw (0,0) -- (0,{-3pt}) node[anchor=north,font=\scriptsize] {$.85$};
			\draw ({3.2*(1/4)},0) -- ({3.2*(1/4)},{-3pt}) node[anchor=north,font=\scriptsize] {$.925$};
			\draw ({3.2*(2/4)},0) -- ({3.2*(2/4)},{-3pt}) node[anchor=north,font=\scriptsize] {$1$};
			\draw ({3.2*(3/4)},0) -- ({3.2*(3/4)},{-3pt}) node[anchor=north,font=\scriptsize] {$1.075$};
			\draw ({3.2*(4/4)},0) -- ({3.2*(4/4)},{-3pt}) node[anchor=north,font=\scriptsize] {$1.15$};
			\node[rotate=90,font=\scriptsize] at (-1.15,1.5) {Derivative payoff};
			\node[font=\scriptsize] at (1.6,-.8) {Market payoff};
			\draw[blue,dotted] plot file {ompd.txt};
			\draw[red,only marks,thick] plot[mark=x,mark size=2pt] file {ompd5.txt};
			\draw[olive,only marks,thick] plot[mark=+,mark size=2pt] file {ompd5int.txt};
		\end{tikzpicture}
		\begin{tikzpicture}
			\draw[thick, ->] (0,0) -- (3.5,0);
			\draw[thick, ->] (0,0) -- (0,3.3);
			\draw (0,0) -- ({-3pt},0) node[anchor=east,font=\scriptsize] {$.85$};
			\draw (0,{3*(1/4)}) -- ({-3pt},{3*(1/4)}) node[anchor=east,font=\scriptsize] {$.925$};
			\draw (0,{3*(2/4)}) -- ({-3pt},{3*(2/4)}) node[anchor=east,font=\scriptsize] {$1$};
			\draw (0,{3*(3/4)}) -- ({-3pt},{3*(3/4)}) node[anchor=east,font=\scriptsize] {$1.075$};
			\draw (0,{3*(4/4)}) -- ({-3pt},{3*(4/4)}) node[anchor=east,font=\scriptsize] {$1.15$};
			\draw (0,0) -- (0,{-3pt}) node[anchor=north,font=\scriptsize] {$.85$};
			\draw ({3.2*(1/4)},0) -- ({3.2*(1/4)},{-3pt}) node[anchor=north,font=\scriptsize] {$.925$};
			\draw ({3.2*(2/4)},0) -- ({3.2*(2/4)},{-3pt}) node[anchor=north,font=\scriptsize] {$1$};
			\draw ({3.2*(3/4)},0) -- ({3.2*(3/4)},{-3pt}) node[anchor=north,font=\scriptsize] {$1.075$};
			\draw ({3.2*(4/4)},0) -- ({3.2*(4/4)},{-3pt}) node[anchor=north,font=\scriptsize] {$1.15$};
			\node[font=\scriptsize] at (1.6,-.8) {Market payoff};
			\draw[blue,dotted] plot file {ompd.txt};
			\draw[red,only marks,thick] plot[mark=x,mark size=2pt] file {ompd10.txt};
			\draw[olive,only marks,thick] plot[mark=+,mark size=2pt] file {ompd10int.txt};
		\end{tikzpicture}
		\begin{tikzpicture}
			\draw[thick, ->] (0,0) -- (3.5,0);
			\draw[thick, ->] (0,0) -- (0,3.3);
			\draw (0,0) -- ({-3pt},0) node[anchor=east,font=\scriptsize] {$.85$};
			\draw (0,{3*(1/4)}) -- ({-3pt},{3*(1/4)}) node[anchor=east,font=\scriptsize] {$.925$};
			\draw (0,{3*(2/4)}) -- ({-3pt},{3*(2/4)}) node[anchor=east,font=\scriptsize] {$1$};
			\draw (0,{3*(3/4)}) -- ({-3pt},{3*(3/4)}) node[anchor=east,font=\scriptsize] {$1.075$};
			\draw (0,{3*(4/4)}) -- ({-3pt},{3*(4/4)}) node[anchor=east,font=\scriptsize] {$1.15$};
			\draw (0,0) -- (0,{-3pt}) node[anchor=north,font=\scriptsize] {$.85$};
			\draw ({3.2*(1/4)},0) -- ({3.2*(1/4)},{-3pt}) node[anchor=north,font=\scriptsize] {$.925$};
			\draw ({3.2*(2/4)},0) -- ({3.2*(2/4)},{-3pt}) node[anchor=north,font=\scriptsize] {$1$};
			\draw ({3.2*(3/4)},0) -- ({3.2*(3/4)},{-3pt}) node[anchor=north,font=\scriptsize] {$1.075$};
			\draw ({3.2*(4/4)},0) -- ({3.2*(4/4)},{-3pt}) node[anchor=north,font=\scriptsize] {$1.15$};
			\node[font=\scriptsize] at (1.6,-.8) {Market payoff};
			\draw[blue,dotted] plot file {ompd.txt};
			\draw[red,only marks,thick] plot[mark=x,mark size=2pt] file {ompd20.txt};
			\draw[olive,only marks,thick] plot[mark=+,mark size=2pt] file {ompd20int.txt};
		\end{tikzpicture}	
	\caption{Price minimizing derivatives constrained to satisfy first-order ($+$) or second-order ($\times$) stochastic dominance over the market portfolio, based on the discrete approximation $\mu_n$ to $\mu$ with $n=5,10,20$ (left, center, right).}\label{fig:illustration}
\end{figure}

With the discrete measures $\mu_n$ and $\nu_n$ in hand, we compute the corresponding price minimizing derivatives constrained to satisfy first- or second-order stochastic dominance over the market portfolio. Computation was accomplished by solving a linear program in the case of second-order stochastic dominance, or a mixed-integer linear program in the case of first-order stochastic dominance. Such programs for minimizing linear objective functions subject to stochastic dominance constraints were developed in \citet{DentchevaRuszczynski03}, \citet{Kuosmanen04} and \citet{Luedtke08}; see also closely related results in \citet{RockafellarUryasev00} and \citet{Post03}, and recent applications in \citet{PostLongarela21} and \citet{BeareSeo22}. The bottom three panels in Figure \ref{fig:illustration} display the price minimizing derivatives we computed for $\mu_n$ and $\nu_n$ with $n=5,10,20$ (left, center, right). The payoffs for the derivatives constrained to satisfy first-order stochastic dominance are displayed with $+$ symbols, while those for the derivatives constrained to satisfy second-order stochastic dominance are displayed with $\times$ symbols. When the two symbols appear at the same location they resemble an asterisk. The three panels also display the optimal measure preserving derivative $\vartheta$ corresponding to the measures $\mu$ and $\nu$ as a dotted line.

The primary takeaway from Figure \ref{fig:illustration} is that, as $n$ increases from $5$ to $10$ to $20$, the price minimizing derivative computed for $\mu_n$ and $\nu_n$ under either the first- or second-order stochastic dominance constraint comes to closely resemble the optimal measure preserving derivative $\vartheta$ for $\mu$ and $\nu$, particularly in the case of the second-order stochastic dominance constraint. This indicates that Proposition \ref{thm:adequate}, which applies under the condition of adequacy, can also deliver a reasonable approximation to the price minimizing derivative constrained to satisfy first- or second-order stochastic dominance in contexts where the adequacy condition is only approximately satisfied.

The bottom-left panel in Figure \ref{fig:illustration} provides a nice illustration of Proposition \ref{thm:pkSSD}, which may be validly applied irrespective of whether the adequacy condition is met. Proposition \ref{thm:pkSSD} asserts that the price minimizing derivative constrained to satisfy second-order stochastic dominance deviates from the market portfolio if and only if the pricing kernel is nonmonotone. The pricing kernel corresponding to the bottom-left panel in Figure \ref{fig:illustration} is higher at $x_3$ than at $x_2$, and consequently nonmonotone. Consistent with Proposition \ref{thm:pkSSD}, we see that the payoffs for the price minimizing derivative constrained to satisfy second-order stochastic dominance deviate from those of the market portfolio, being higher at $x_2$ than at $x_3$. On the other hand, the payoffs for the price minimizing derivative constrained to satisfy first-order stochastic dominance lie precisely on the 45-degree line and therefore do not deviate from those of the market portfolio. This reflects the fact that, in the absence of the adequacy condition, pricing kernel nonmonotonicity is equivalent to the existence of stochastic arbitrage opportunities defined in the sense of second-order stochastic dominance, but is not equivalent to the existence of stochastic arbitrage opportunities defined in the sense of first-order stochastic dominance.

\section{Further discussion}\label{sec:discussion}

The connection between pricing kernel monotonicity and the existence of stochastic arbitrage opportunities has been an active area of research since the documentation of empirical violations of pricing kernel monotonicity with S\&P500 index options in \citet{AitSahaliaLo00}, \citet{Jackwerth00} and \citet{RosenbergEngle02}. It was immediately recognized in these articles that such violations are inconsistent with the existence of a risk-averse agent willing to hold the market portfolio. The empirical failure of pricing kernel monotonicity became known as the pricing kernel puzzle or risk aversion puzzle. In \citet{Beare11} I drew attention to the relevance of the results in \citet{Dybvig88} for understanding the implications of pricing kernel nonmonotonicity. The puzzling implication of these results was that, even in the absence of risk aversion, an agent merely preferring more to less will not hold the market portfolio under pricing kernel nonmonotonicity. This brought into question the centrality of risk aversion in understanding pricing kernel nonmonotonicity. See \citet[pp.\ 223--226]{Perrakis19} for further discussion of this history.

The results in \citet{PostLongarela21} and \citet{KleinerMoldovanuStrack21} discussed in this article establish that risk aversion is indeed the central issue relevant to understanding pricing kernel nonmonotonicity. Propositions \ref{thm:pkSSD} and \ref{thm:adequate} and Examples \ref{ex1} and \ref{ex2} in this article establish that it is the technical condition of adequacy imposed in \citet{Dybvig88} which broadens the set of stochastic arbitrage opportunities available under pricing kernel nonmonotonicity to include replicating portfolios.

The model we have studied is one of a complete and frictionless options market, and therefore highly stylized. Real-world options markets are neither complete nor frictionless. Empirical studies seeking to identify stochastic arbitrage opportunities in options markets, including studies of the market for S\&P500 index put and call options reported in \citet{ConstantinidesJackwerthPerrakis09}, \citet{PostLongarela21} and \citet{BeareSeo22}, take into account the fact that options are written only at limited strikes and in limited quantities (so that the market is incomplete), and the fact that bid-ask spreads can be wide (so that the market is not frictionless). The latter article reports the outcome of a search for stochastic arbitrage opportunities defined in two senses: first- and second-order stochastic dominance over the market portfolio. It is notable that the results differ greatly depending on which sense of stochastic arbitrage is adopted, with stochastic arbitrage opportunities defined in terms of second-order stochastic dominance being more frequently detected. This finding contrasts with our demonstration in Proposition \ref{thm:adequate} of the equivalence of the existence of either form of stochastic arbitrage under adequacy. Market incompleteness and frictions may be important factors in explaining this discrepancy.

Out of the four stochastic orders we have used to define different senses of stochastic arbitrage, second-order stochastic dominance is particularly appealing for a number of reasons. As the weakest of the stochastic orders, second-order stochastic dominance provides the greatest scope for identifying stochastic arbitrage opportunities. Moreover, any such opportunity ought to be preferred to the market portfolio by any investor with preferences satisfying the minimal conditions of nonsatiation and risk aversion. As discussed in \citet{KopaPost09}, the nonexistence of stochastic arbitrage opportunities defined in the sense of second-order dominance provides a natural characterization of market efficiency, whereas a distinction between admissibility and optimality makes the characterization of market efficiency more ambiguous with first-order stochastic dominance. The equivalence of pricing kernel nonmonotonicity and the existence of stochastic arbitrage opportunities defined in terms of second-order stochastic dominance does not hinge on the technical condition of adequacy, unlike the stronger first-order stochastic dominance and distributional equality relations. In practical implementations, stochastic arbitrage opportunities defined in the sense of second-order stochastic dominance can be easily computed using linear programs, whereas much more computationally demanding mixed-integer linear programs are needed in the case of first-order stochastic dominance or distributional equality.

Broadening the definition of stochastic arbitrage even further by adopting a weaker ordering such as third-order stochastic dominance may be advantageous in some respects. A security third-order stochastically dominating the market portfolio will be preferred by all nonsatiated risk-averse investors with preferences satisfying decreasing absolute risk aversion, a condition widely regarded as uncontroversial by financial economists. \citet{PostKopa17} have developed computational methods for identifying stochastic arbitrage opportunities defined in the sense of third-order stochastic dominance, while methods applicable under higher-order stochastic dominance constraints have been developed in \citet{FangPost22}. Future research may investigate the extent to which the results in this article linking the shape of the pricing kernel to the existence of stochastic arbitrage opportunities can be extended to accommodate more general notions of stochastic arbitrage defined in terms of higher-order stochastic dominance constraints.

\appendix

\numberwithin{theorem}{section}
\numberwithin{lemma}{section}
\numberwithin{equation}{section}
\numberwithin{thm}{section}

\section{Mathematical appendix}\label{sec:appx}

\subsection{Mathematical preliminaries}\label{sec:mathprelim}

\subsubsection{Distributions, quantiles, and rearrangements}\label{appx:rearrangement}

Let $(\Omega,\mu)$ be a probability space, and let $M_+$ denote the collection of all nonnegative real valued Borel measurable maps on $(\Omega,\mu)$. If $f\in M_+$ then we define the \emph{distribution function} of $f$ to be the map $F_f:\bar{\mathbb R}_+\to[0,1]$ given by
\begin{align*}
	F_f(x)&=\mu\{\omega\in\Omega:f(\omega)\leq x\},
\end{align*}
and we define the \emph{quantile function} of $f$ to be the map $Q_f:[0,1]\to\bar{\mathbb R}_+$ given by
\begin{align*}
	Q_f(u)&=\inf\{x\in\mathbb R_+:F_f(x)\geq u\}.
\end{align*}
The quantile function of $f$, or its right-continuous version, is also referred to as the \emph{nondecreasing rearrangement} of $f$.
\begin{lemma}[Generalized inverses]\label{lem:inverse}
	If $f\in M_+$ then $Q_f(F_f(x))\leq x$ for all $x\in\bar{\mathbb R}_+$. If also $F_f$ is continuous then $F_f(Q_f(u))=u$ for all $u\in[0,1]$.
\end{lemma}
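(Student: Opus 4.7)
The plan is to prove each part by unwinding the definition of $Q_f$ as an infimum, using continuity of $F_f$ in the second part to upgrade the infimum to an attained value.

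For the first inequality $Q_f(F_f(x)) \leq x$, I would argue directly from the definitions. If $x = \infty$ the bound is trivial. If $x \in \mathbb{R}_+$, then $F_f(x) \geq F_f(x)$, so $x$ itself lies in the set $\{y \in \mathbb{R}_+ : F_f(y) \geq F_f(x)\}$ whose infimum is $Q_f(F_f(x))$, giving the inequality immediately. This half is essentially a one-line observation.

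For the equality $F_f(Q_f(u)) = u$ under continuity of $F_f$, I would establish both inequalities separately. For $F_f(Q_f(u)) \geq u$, use the definition of the infimum to select a nonincreasing sequence $y_n \in \mathbb{R}_+$ with $F_f(y_n) \geq u$ and $y_n \downarrow Q_f(u)$, then invoke right-continuity of $F_f$ (which is a consequence of continuity together with monotonicity) to conclude $F_f(Q_f(u)) = \lim_n F_f(y_n) \geq u$. For the reverse inequality, I would argue by contradiction: if $F_f(Q_f(u)) > u$, continuity of $F_f$ from the left at $Q_f(u)$ would supply a point $y < Q_f(u)$ with $F_f(y) \geq u$, contradicting minimality of $Q_f(u)$ in $\{y \in \mathbb{R}_+ : F_f(y) \geq u\}$.

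The only real subtlety — and the closest thing to an obstacle — is the boundary behaviour at $u = 0$ with $Q_f(u) = 0$. There the left-continuity argument needs to be interpreted carefully since the domain is $\bar{\mathbb{R}}_+$. But since $f \geq 0$ forces $F_f(y) = 0$ for $y < 0$, continuity of $F_f$ at $0$ pins down $F_f(0) = 0$, which handles the case $u = 0$ and rules out the problematic scenario. Everything else is standard manipulation of generalized inverses, with no substantive difficulty.
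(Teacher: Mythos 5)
Your argument is correct, but it takes a different route from the paper: the paper does not prove this lemma at all, instead citing parts (ii) and (iv) of Lemma 21.1 in van der Vaart (1998), whereas you supply a self-contained elementary proof from the definition of $Q_f$ as an infimum. Your first half ($x$ belongs to the set whose infimum is $Q_f(F_f(x))$) is exactly the standard one-line argument, and your two-sided argument for $F_f(Q_f(u))=u$ via right-continuity along a minimizing sequence and left-continuity for the contradiction is sound. Two remarks. First, the boundary issue you flag at $Q_f(u)=0$ is genuinely the delicate point: since $F_f$ is only defined on $\bar{\mathbb R}_+$, the hypothesis ``$F_f$ is continuous'' must be read as continuity of the law of $f$ on all of $\mathbb R$ (equivalently, no atom at $0$, so $F_f(0)=0$); if one only assumed continuity of the restriction to $\bar{\mathbb R}_+$, the equality would fail for $u<F_f(0)$ whenever $\mu\{f=0\}>0$. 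Your reading is the one under which the lemma is true and under which it is applied in the paper (to atomless measures). Second, a negligible omission: for $u=1$ the set $\{x\in\mathbb R_+:F_f(x)\geq 1\}$ may be empty, in which case your minimizing sequence does not exist; but then $Q_f(1)=\infty$ and $F_f(\infty)=\mu\{f\leq\infty\}=1$ because $f$ is real valued, so the conclusion still holds. With these points noted, the proof is complete; it buys transparency at the cost of a few lines, while the paper's citation buys brevity.
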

\begin{proof}
	See parts (ii) and (iv) of Lemma 21.1 in \citet{vanderVaart98}.
\end{proof}
\begin{lemma}[Probability integral transforms]\label{lem:PIT}
	If $f\in M_+$ and if $u$ belongs to the range of $F_f$, then $\mu\{\omega\in\Omega:F_f(f(\omega))\leq u\}=u$.
\end{lemma}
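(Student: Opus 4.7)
My plan is to bracket the $\mu$-measure of the event $\{\omega:F_f(f(\omega))\leq u\}$ between $u$ and $u$, using only the monotonicity and right-continuity of the distribution function $F_f$. Fix $x_0\in\mathbb R_+$ with $F_f(x_0)=u$, which exists by the range hypothesis. For the lower bound, monotonicity of $F_f$ yields the inclusion $\{\omega:f(\omega)\leq x_0\}\subseteq\{\omega:F_f(f(\omega))\leq u\}$, so the latter event has $\mu$-measure at least $\mu\{f\leq x_0\}=F_f(x_0)=u$.

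For the upper bound, write $S_u=\{x\in\mathbb R_+:F_f(x)\leq u\}$, which, since $F_f$ is nondecreasing, is a downward-closed interval; then $\{F_f(f(\omega))\leq u\}=\{f(\omega)\in S_u\}$. The case $u=1$ is immediate, so assume $u<1$ and set $x^\ast=\sup S_u$, which is finite since $F_f(x)\to 1$ as $x\to\infty$. By right-continuity of $F_f$ applied to a sequence $y_n\downarrow x^\ast$ lying outside $S_u$ (so $F_f(y_n)>u$), one obtains $F_f(x^\ast)\geq u$. I would then split into two subcases. If $F_f(x^\ast)=u$, then $S_u=[0,x^\ast]$ and $\mu\{f\in S_u\}=F_f(x^\ast)=u$. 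If $F_f(x^\ast)>u$, then $S_u=[0,x^\ast)$, so $\mu\{f\in S_u\}=\sup_{y<x^\ast}F_f(y)$, and this supremum is at most $u$ because for each $y<x^\ast$ the definition of supremum yields some $y'\in[y,x^\ast)\cap S_u$, giving $F_f(y)\leq F_f(y')\leq u$.

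Combining the two bounds yields the claimed equality. The main obstacle is the case analysis at $x^\ast$, needed because $F_f$ may jump at the right endpoint of its sublevel set $S_u$; the range hypothesis $u\in F_f(\mathbb R_+)$ enters precisely through the point $x_0$ that powers the lower bound, while the upper bound is in fact valid for all $u\in[0,1]$.
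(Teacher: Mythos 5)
Your argument is correct. The paper does not actually prove this lemma; it simply cites Lemma 1 of Angus (1994), so your self-contained elementary proof is a genuine substitute rather than a paraphrase of the paper's argument. The structure is sound: the lower bound correctly exploits the range hypothesis via the inclusion $\{f\leq x_0\}\subseteq\{F_f(f)\leq u\}$, and the upper bound correctly reduces to measuring the sublevel set $S_u$, with the two subcases at $x^\ast$ handling exactly the possible jump of $F_f$ there (the identity $\mu\{f<x^\ast\}=\sup_{y<x^\ast}F_f(y)$ is continuity from below, and $F_f(x^\ast)\geq u$ is right-continuity, i.e.\ continuity from above of the finite measure $\mu$). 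Two small points you leave implicit but which do hold: $S_u$ is nonempty (it contains $x_0$, since $u=F_f(x_0)\geq F_f(0)$ forces $0\in S_u$), so $x^\ast$ is well defined; and since $F_f$ is defined on $\bar{\mathbb R}_+$, a point $x_0$ with $F_f(x_0)=u$ and $u<1$ is automatically finite, so your restriction $x_0\in\mathbb R_+$ is harmless once the case $u=1$ is dispatched. Your closing observation that the range hypothesis powers only the lower bound, while $\mu\{F_f(f)\leq u\}\leq u$ holds for every $u\in[0,1]$, is also accurate and is the right way to see why the hypothesis cannot be dropped.
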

\begin{proof}
	See Lemma 1 in \citet{Angus94}.
\end{proof}
\begin{lemma}[Second-order stochastic dominance]\label{lem:SSD}
	For $f_1,f_2\in M_+$ the following three conditions are equivalent.
	\begin{enumerate}
		\item $\int_0^yF_{f_1}(x)\mathrm{d}x\leq\int_0^yF_{f_2}(x)\mathrm{d}x$ for all $y\in\mathbb R_+$.\label{SSDf}
		\item $\int_0^vQ_{f_1}(u)\mathrm{d}u\geq\int_0^vQ_{f_2}(u)\mathrm{d}u$ for all $v\in[0,1]$.\label{SSDq}
		\item $\int_0^\infty u(f_1(x))\mathrm{d}\mu(x)\geq\int_0^\infty u(f_2(x))\mathrm{d}\mu(x)$ for all nondecreasing and concave maps $u:\mathbb R_+\to\mathbb R$ such that the integrals exist.\label{SSDu}
	\end{enumerate}
\end{lemma}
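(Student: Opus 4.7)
The plan is to establish the three-way equivalence via (iii) $\Rightarrow$ (i) $\Leftrightarrow$ (ii) $\Rightarrow$ (iii). The easy implication (iii) $\Rightarrow$ (i) is dispatched by applying (iii) to the test functions $u_y(x) = -(y-x)_+$, which are nondecreasing and concave on $\mathbb{R}_+$ for each $y \geq 0$; a direct Fubini computation (or the layer-cake formula) gives $\int (y-f)_+ \,\mathrm{d}\mu = \int_0^y F_f(x)\,\mathrm{d}x$ for any $f \in M_+$, so (iii) with $u = u_y$ rearranges to (i). The main obstacle will be the reverse implication (i) $\Rightarrow$ (iii), which calls for an integral representation of nondecreasing concave functions as mixtures of elementary test functions.

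For (i) $\Leftrightarrow$ (ii) I would invoke convex-conjugate duality. Set $G_f(y) := \int_0^y F_f(x)\,\mathrm{d}x$ and $H_f(v) := \int_0^v Q_f(u)\,\mathrm{d}u$; both are nondecreasing and convex, since the derivatives $F_f$ and $Q_f$ are nondecreasing. A layer-cake argument produces the Young-type identity
\begin{equation*}
	yv \leq G_f(y) + H_f(v), \qquad y \geq 0,\ v \in [0,1],
\end{equation*}
with equality whenever $v = F_f(y)$, so $G_f$ and $H_f$ form a Legendre--Fenchel conjugate pair. The order-reversing property of the Fenchel transform then immediately yields $G_{f_1} \leq G_{f_2}$ on $\mathbb{R}_+$ if and only if $H_{f_1} \geq H_{f_2}$ on $[0,1]$.

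For (i) $\Rightarrow$ (iii) I would represent any nondecreasing concave $u:\mathbb{R}_+\to\mathbb{R}$ via its right derivative $u'$. Writing $\alpha = \lim_{x\to\infty} u'(x) \geq 0$ and letting $\sigma$ be the nonnegative Borel measure on $(0,\infty)$ with $u'(t) = \alpha + \sigma((t,\infty))$, Fubini gives the representation
\begin{equation*}
	u(x) = u(0) + \alpha x + \int_0^\infty (x \wedge s)\,\mathrm{d}\sigma(s).
\end{equation*}
Since $\int (f \wedge s)\,\mathrm{d}\mu = s - G_f(s)$ for $f \in M_+$, integrating against $\mu$ and subtracting the corresponding identities for $f_1$ and $f_2$ (finite by the hypothesis of (iii)) gives
\begin{equation*}
	\int u(f_1)\,\mathrm{d}\mu - \int u(f_2)\,\mathrm{d}\mu = \alpha\Bigl(\int f_1\,\mathrm{d}\mu - \int f_2\,\mathrm{d}\mu\Bigr) + \int_0^\infty \bigl[G_{f_2}(s) - G_{f_1}(s)\bigr]\,\mathrm{d}\sigma(s).
\end{equation*}
The second term is nonnegative by (i); for the first, letting $y\to\infty$ in (i) and using $G_f(y) = y - \int_0^y(1 - F_f(x))\,\mathrm{d}x$ with monotone convergence yields $\int f_1\,\mathrm{d}\mu \geq \int f_2\,\mathrm{d}\mu$. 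The right-hand side is therefore nonnegative, giving (iii). Edge cases in which one of the integrals in (iii) is $+\infty$ are harmless: the representation forces both $\int u(f_i)\,\mathrm{d}\mu$ to be simultaneously finite or simultaneously infinite under (i), so the inequality in (iii) is trivial in the infinite case.
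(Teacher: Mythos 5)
Your proof is correct, but it takes a genuinely different route from the paper for the simple reason that the paper does not prove this lemma at all: it is dispatched with a citation to Theorems 4.A.1--4.A.3 of Shaked and Shanthikumar (2007). What you have written is a self-contained argument: the test functions $u_y(x)=-(y-x)_+$ together with the mixture representation $u(x)=u(0)+\alpha x+\int_0^\infty (x\wedge s)\,\mathrm{d}\sigma(s)$ handle the equivalence of \ref{SSDf} and \ref{SSDu}, and the observation that $G_f$ and $H_f$ form a Legendre--Fenchel conjugate pair (the classical duality between the integrated distribution function and the integrated quantile, i.e.\ Lorenz, function) handles the equivalence of \ref{SSDf} and \ref{SSDq}. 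What this buys over the citation is transparency about the non-integrable case, which is relevant here since the paper deliberately avoids assuming $X$ integrable or $\pi$ bounded. Two small points are worth tightening. First, your closing claim that \ref{SSDf} forces the two integrals to be ``simultaneously finite or simultaneously infinite'' is not quite right: $\int u(f_1)\,\mathrm{d}\mu$ can be $+\infty$ while $\int u(f_2)\,\mathrm{d}\mu$ is finite. The direction you actually need, namely that $\int u(f_2)\,\mathrm{d}\mu=+\infty$ implies $\int u(f_1)\,\mathrm{d}\mu=+\infty$, does follow, and the cleanest fix is to avoid subtraction altogether: since $u\geq u(0)$, write $\int u(f_i)\,\mathrm{d}\mu-u(0)=\alpha\int f_i\,\mathrm{d}\mu+\int_0^\infty\bigl(s-G_{f_i}(s)\bigr)\mathrm{d}\sigma(s)$ with all terms in $[0,+\infty]$, and note that \ref{SSDf} makes each term for $f_1$ dominate the corresponding term for $f_2$ (using $\int f\,\mathrm{d}\mu=\lim_{y\to\infty}(y-G_f(y))$). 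Second, to extract the ``if and only if'' from conjugacy you need both $G_f=H_f^{\,*}$ and $H_f=G_f^{\,*}$, which requires equality in the Young inequality not only at $v=F_f(y)$ for each $y$ but also at $y=Q_f(v)$ for each $v$ (or, equivalently, a biconjugation step using convexity and lower semicontinuity of $H_f$); the latter equality holds because $F_f(Q_f(v)-)\leq v\leq F_f(Q_f(v))$, and is worth stating explicitly.
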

\begin{proof}
	See Theorems 4.A.1, 4.A.2 and 4.A.3 in \citet{ShakedShanthikumar07}.
\end{proof}
\begin{lemma}[Hardy quantile majorization inequality]\label{lem:Hardy}
	If $f_1,f_2\in M_+$ have quantile functions $Q_{f_1},Q_{f_2}$ satisfying condition \ref{SSDq} in Lemma \ref{lem:SSD}, and if $g:[0,1]\to\bar{\mathbb R}_+$ is nonincreasing, then
	\begin{align*}
		\int_0^vQ_{f_1}(u)g(u)\mathrm{d}u&\geq\int_0^vQ_{f_2}(u)g(u)\mathrm{d}u\quad\text{for all }v\in[0,1].
	\end{align*}
\end{lemma}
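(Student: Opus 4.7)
My plan is to reduce the claimed inequality to the hypothesis $\int_0^v Q_{f_1}(u)\mathrm{d}u\geq\int_0^v Q_{f_2}(u)\mathrm{d}u$ by performing a layer-cake decomposition of the nonincreasing weight $g$. Since $g:[0,1]\to\bar{\mathbb R}_+$ is nonincreasing, each superlevel set $\{u\in[0,1]:g(u)>t\}$ is, up to its right endpoint, an initial interval $[0,s_t)$ with $s_t=\sup\{u\in[0,1]:g(u)>t\}\in[0,1]$ (using $\sup\emptyset=0$). The layer-cake identity $g(u)=\int_0^\infty\mathbf{1}\{g(u)>t\}\mathrm{d}t$ then holds for every $u\in[0,1]$, interpreted in $\bar{\mathbb R}_+$.

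Substituting this representation into the left-hand side and interchanging the order of integration by Tonelli's theorem, which applies because every integrand is nonnegative, I would write, for $j\in\{1,2\}$,
\begin{align*}
	\int_0^vQ_{f_j}(u)g(u)\mathrm{d}u&=\int_0^v Q_{f_j}(u)\int_0^\infty\mathbf{1}\{g(u)>t\}\mathrm{d}t\,\mathrm{d}u\\
	&=\int_0^\infty\int_0^{v\wedge s_t}Q_{f_j}(u)\mathrm{d}u\,\mathrm{d}t.
\end{align*}
For each fixed $t\geq0$, the hypothesis applied at the point $v\wedge s_t\in[0,1]$ yields $\int_0^{v\wedge s_t}Q_{f_1}(u)\mathrm{d}u\geq\int_0^{v\wedge s_t}Q_{f_2}(u)\mathrm{d}u$. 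Integrating this pointwise inequality over $t\in[0,\infty)$ delivers the desired conclusion.

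The only real technical point concerns extended-real values: both $g$ and the quantile functions $Q_{f_j}$ may equal $+\infty$ (the latter at $u=1$, for instance). In the layer-cake formulation this causes no trouble, because all quantities are nonnegative and Tonelli applies throughout $[0,+\infty]$. Should one wish to avoid extended-valued integrands, the argument can first be carried out for the truncated weight $g_n=g\wedge n$, which is bounded and nonincreasing, and the general case recovered by monotone convergence as $n\to\infty$. I expect this bookkeeping of infinite values to be the main, and essentially only, obstacle; the structural inequality itself follows almost immediately from the layer-cake decomposition together with the hypothesis.
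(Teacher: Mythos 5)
Your proof is correct, and it takes a genuinely different route from the paper: the paper does not prove this inequality from scratch but cites Theorem 5.1 of Luxemburg (1967) (equivalently Day 1970 or Chong--Rice 1971) for the case where $Q_{f_1}$, $Q_{f_2}$, $Q_{f_1}g$ and $Q_{f_2}g$ are integrable, and then removes the integrability hypothesis by monotone convergence. Your layer-cake argument is self-contained and handles the extended-real-valued case in one pass: since $g$ is nonincreasing, each superlevel set $\{u:g(u)>t\}$ is an initial segment $[0,s_t)$ up to a null set, so writing $g(u)=\int_0^\infty\mathbf{1}\{g(u)>t\}\,\mathrm{d}t$ and applying Tonelli reduces the weighted inequality to the hypothesis evaluated at $v\wedge s_t$ for each $t$; measurability of $\{(u,t):g(u)>t\}$ is routine (it is a countable union of measurable rectangles $\{u:g(u)\geq q\}\times[0,q)$ over rational $q$), and the convention $0\cdot\infty=0$ keeps the two sides of the layer-cake identity consistent even where $g=+\infty$. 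This is essentially the classical proof underlying the cited theorems, so nothing is lost relative to the paper; what your version buys is that the monotone-convergence afterthought in the paper's proof becomes unnecessary, since Tonelli already covers nonnegative extended-real integrands, and the reader need not consult the rearrangement literature. The truncation $g_n=g\wedge n$ you mention as a fallback is a fine alternative but is not needed.
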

\begin{proof}
See Theorem 5.1 in \citet{Luxemburg67}, Theorem 7.1 in \citet{Day70} or Theorem 9.1 in \citet{ChongRice71} for the case where $Q_{f_1}$, $Q_{f_2}$, $Q_{f_1}g$ and $Q_{f_2}g$ are integrable. Since $Q_{f_1}$, $Q_{f_2}$ and $g$ are nonnegative we may dispense with integrability by applying the monotone convergence theorem.
\end{proof}
We say that maps $f_1,f_2\in M_+$ are \emph{comonotonic} if
\begin{align*}
	\mu\otimes\mu\{(\omega,\omega')\in\Omega\times\Omega:(f_1(\omega)-f_1(\omega'))(f_2(\omega)-f_2(\omega'))\geq0\}&=1,
\end{align*}
or \emph{countermonotonic} if
\begin{align*}
	\mu\otimes\mu\{(\omega,\omega')\in\Omega\times\Omega:(f_1(\omega)-f_1(\omega'))(f_2(\omega)-f_2(\omega'))\leq0\}&=1.
\end{align*}
\begin{lemma}[Hardy-Littlewood rearrangement inequality]\label{lem:HardyLittlewood}
	If $f_1,f_2\in M_+$ then
	\begin{align*}
		\int_0^1Q_{f_1}(u)Q_{f_2}(1-u)\mathrm{d}u&\leq\int f_1f_2\mathrm{d}\mu\leq\int_0^1Q_{f_1}(u)Q_{f_2}(u)\mathrm{d}u.
	\end{align*}
The former inequality holds with equality if $f_1$ and $f_2$ are countermonotonic, while the latter inequality holds with equality if $f_1$ and $f_2$ are comonotonic.
\end{lemma}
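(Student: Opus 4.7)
The plan is to give a direct layer-cake proof, reducing everything to the pointwise Fréchet-Hoeffding bounds on the probability of an intersection of two events. First I would invoke the layer-cake representation together with Tonelli's theorem to write, for $f_1,f_2\in M_+$,
\begin{align*}
	\int f_1f_2\,\mathrm{d}\mu&=\int_0^\infty\!\!\int_0^\infty\mu\{f_1>s,\,f_2>t\}\,\mathrm{d}s\,\mathrm{d}t.
\end{align*}
This localizes the question to bounding $\mu\{f_1>s,\,f_2>t\}$ in terms of the marginal survival functions $\bar F_{f_i}(x)=\mu\{f_i>x\}$. From the elementary Fréchet-Hoeffding bounds on the measure of an intersection, applied to $A=\{f_1>s\}$ and $B=\{f_2>t\}$, I obtain
\begin{align*}
	\max\bigl(0,\,\bar F_{f_1}(s)+\bar F_{f_2}(t)-1\bigr)\leq\mu\{f_1>s,\,f_2>t\}\leq\min\bigl(\bar F_{f_1}(s),\,\bar F_{f_2}(t)\bigr),
\end{align*}
which I then integrate over $(s,t)\in\mathbb R_+^2$ to produce the two inequalities, modulo the task of converting the integrated bounds into the quantile integrals stated in the lemma.

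That conversion is the only computational step, and it rests on two identities. For $a,b\in[0,1]$ one checks $\min(a,b)=\int_0^1\mathbf 1\{a>r,b>r\}\,\mathrm{d}r$ and $\max(0,a+b-1)=\int_0^1\mathbf 1\{a>r,b>1-r\}\,\mathrm{d}r$; applied pointwise to $a=\bar F_{f_1}(s)$ and $b=\bar F_{f_2}(t)$ and combined with Tonelli, each double integral factorizes into a product of one-dimensional integrals. These one-dimensional integrals are handled by the generalized inverse identity $F_f(s)<u\Leftrightarrow s<Q_f(u)$, which yields $\int_0^\infty\mathbf 1\{\bar F_{f_i}(s)>r\}\,\mathrm{d}s=Q_{f_i}(1-r)$. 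A change of variables $u=1-r$ then produces $\int_0^1Q_{f_1}(u)Q_{f_2}(u)\,\mathrm{d}u$ from the upper bound and $\int_0^1Q_{f_1}(u)Q_{f_2}(1-u)\,\mathrm{d}u$ from the lower bound.

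It remains to settle the equality assertions, which I expect to be the main subtlety. For the comonotonic case, I would argue that comonotonicity forces $\{f_1>s\}$ and $\{f_2>t\}$ to be nested modulo $\mu$-null sets for every $(s,t)$: if instead both $\{f_1>s\}\setminus\{f_2>t\}$ and $\{f_2>t\}\setminus\{f_1>s\}$ had positive $\mu$-measure, then their product would be a set of positive $\mu\otimes\mu$-measure on which $(f_1(\omega)-f_1(\omega'))(f_2(\omega)-f_2(\omega'))<0$, contradicting comonotonicity. Nesting implies $\mu\{f_1>s,f_2>t\}=\min(\bar F_{f_1}(s),\bar F_{f_2}(t))$, so the upper bound is attained pointwise in $(s,t)$ and hence after integration. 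The countermonotonic case is analogous, replacing nesting by a disjointness statement: $\mu\{f_1>s,f_2>t\}=\max(0,\bar F_{f_1}(s)+\bar F_{f_2}(t)-1)$. Because all integrands are nonnegative, no integrability assumption is needed, and Tonelli handles any infinite values that arise.
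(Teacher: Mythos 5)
Your proof is correct, but it takes a genuinely different and more self-contained route than the paper. The paper's proof of this lemma is essentially a chain of citations: the two inequalities are referred to Theorem 8.2 of Luxemburg (1967), Theorem 10.1 of Day (1970) or Theorem 12.2 of Chong and Rice (1971), and the equality cases are obtained by invoking copula-theoretic results (Theorems 2.5.4--2.5.5 in Nelsen (2006) identifying the joint law of a co/countermonotonic pair with a Fr\'echet--Hoeffding bound, plus Theorems 2.1 and 3.1 in Puccetti and Wang (2015) giving the representation $(Q_{f_1}(U),Q_{f_2}(U))$ or $(Q_{f_1}(U),Q_{f_2}(1-U))$), together with a monotone-convergence remark to dispense with integrability. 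You instead give a direct layer-cake argument: the identity $\int f_1f_2\,\mathrm{d}\mu=\iint\mu\{f_1>s,f_2>t\}\,\mathrm{d}s\,\mathrm{d}t$, the elementary Fr\'echet--Hoeffding bounds on $\mu(A\cap B)$, and the two scalar identities converting $\min(a,b)$ and $\max(0,a+b-1)$ into integrals of indicators, with the generalized-inverse equivalence $F_f(s)<u\Leftrightarrow s<Q_f(u)$ closing the computation; your nesting argument (for comonotonic $f_1,f_2$, one of $\mu(A\setminus B)$, $\mu(B\setminus A)$ must vanish for every pair of upper level sets $A,B$) replaces the copula citations at the level of sets, and the countermonotonic case correctly reduces to ``either $A\cap B$ or $A^c\cap B^c$ is $\mu$-null'' (your word ``disjointness'' slightly undersells the second alternative, but the displayed conclusion is the right one). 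What your approach buys is a proof from first principles requiring nothing beyond Tonelli and right-continuity of distribution functions, with nonnegativity automatically handling infinite integrals; what the paper's approach buys is brevity and a clean delegation of the equality characterizations to standard results on extremal dependence.
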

\begin{proof}
	For the inequalities, see Theorem 8.2 in \citet{Luxemburg67}, Theorem 10.1 in \citet{Day70} or Theorem 12.2 in \citet{ChongRice71},  and the subsequent remarks therein regarding integrability and nonnegativity. If $f_1$ and $f_2$ are countermonotonic then Theorem 2.5.5 in \citet{Nelsen06} implies that the joint distribution of $f_1$ and $f_2$ is its Fr\'{e}chet-Hoeffding lower bound, and thus Theorem 3.1 in \citet{PuccettiWang15} implies that the joint distribution of $f_1$ and $f_2$ is the same as that of $Q_{f_1}(U)$ and $Q_{f_2}(1-U)$, where $U$ is a random variable distributed uniformly on $(0,1)$. Thus the former inequality holds with equality. If $f_1$ and $f_2$ are comonotonic then a symmetric argument using Theorem 2.5.4 in \citet{Nelsen06} and Theorem 2.1 in \citet{PuccettiWang15} establishes that the latter inequality holds with equality.
\end{proof}

\subsubsection{Disintegration of probability measures}\label{appx:disintegration}

Let $(\Omega,\mu)$ and $M_+$ be as in Section \ref{appx:rearrangement}, and suppose further that $\Omega$ is a metric space and $\mu$ is Borel. Given any map $f\in M_+$, we will write $\mu f^{-1}$ for the Borel probability measure on $\mathbb R_+$ assigning measure $\mu\{x:f(x)\in B\}$ to each Borel subset $B$ of $\mathbb R_+$.
\begin{lemma}[Disintegration theorem]\label{thm:disintegration}
	Given any map $\pi\in M_+$, there exists a collection $\{\mu_z:z\in\mathbb R_+\}$ of Borel probability measures on $\Omega$ satisfying the following conditions.
	\begin{enumerate}
		\item $\mu_z\{x:\pi(x)=z\}=1$ for $\mu\pi^{-1}$-almost all $z$.\label{eq:disintegration1}
		\item $z\mapsto\int f\mathrm{d}\mu_z$ is measurable for each $f\in M_+$.\label{eq:disintegration2}
		\item $\int f\mathrm{d}\mu=\iint f\mathrm{d}\mu_{\pi(x)}\mathrm{d}\mu(x)$ for each $f\in M_+$.\label{eq:disintegration3}
	\end{enumerate}
\end{lemma}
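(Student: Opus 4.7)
The plan is to obtain $\{\mu_z\}$ as a regular conditional distribution for $\mu$ given the $\sigma$-algebra $\pi^{-1}(\mathcal{B}(\mathbb{R}_+))$ generated by $\pi$. Concretely, I would first note that in the application of interest $\Omega=\mathbb{R}_+$ is a Polish space, which is the setting covered by the disintegration theorems in \citet{ChangPollard97}; thus the bulk of the work is packaged into a citation, and my job is to verify that their hypotheses are satisfied and translate their output into the three conditions \ref{eq:disintegration1}--\ref{eq:disintegration3}.

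I would proceed as follows. First, observe that $\mu\pi^{-1}$ is a Borel probability measure on $\mathbb{R}_+$ because $\pi$ is Borel measurable. Next, apply the existence theorem for regular conditional distributions on a Polish space (Chang and Pollard's Theorem 1, which itself descends from standard results for Polish spaces such as Theorem 10.2.2 in Dudley's \emph{Real Analysis and Probability}) to the measurable map $\pi:\Omega\to\mathbb{R}_+$. This yields a collection $\{\mu_z:z\in\mathbb{R}_+\}$ of Borel probability measures on $\Omega$ such that (a) the map $z\mapsto\mu_z(A)$ is Borel measurable for each Borel $A\subseteq\Omega$, and (b) for every Borel $B\subseteq\mathbb{R}_+$ and every Borel $A\subseteq\Omega$,
\begin{equation*}
\mu(A\cap\pi^{-1}(B))=\int_B\mu_z(A)\,\mathrm{d}(\mu\pi^{-1})(z).
\end{equation*}

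To extract condition \ref{eq:disintegration1}, specialize the displayed identity by choosing $A=\pi^{-1}(B^c)$ and noting that $\mu(\pi^{-1}(B^c)\cap\pi^{-1}(B))=0$. This forces $\mu_z(\pi^{-1}(B^c))=0$ for $\mu\pi^{-1}$-almost all $z\in B$. Running $B$ through a countable generating class of open sets in $\mathbb{R}_+$ and taking a common null set produces, for $\mu\pi^{-1}$-almost all $z$, a version of $\mu_z$ concentrated on the fiber $\pi^{-1}(\{z\})=\{x:\pi(x)=z\}$, which is \ref{eq:disintegration1}. Condition \ref{eq:disintegration2} is just property (a) extended from indicator functions to $f\in M_+$ by the monotone class argument (approximate $f$ by simple functions and apply monotone convergence). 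Condition \ref{eq:disintegration3} follows by first establishing it for $f=\mathbbm{1}_A$ via the displayed identity with $B=\mathbb{R}_+$ (and using \ref{eq:disintegration1} to replace $\int f\,\mathrm{d}\mu_z$ by $\int f\,\mathrm{d}\mu_{\pi(x)}$ pointwise on a set of full $\mu$-measure), then bootstrapping to arbitrary $f\in M_+$ by linearity and monotone convergence.

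The only real obstacle is the measurability detail: property (a) gives measurability of $z\mapsto\mu_z(A)$, whereas \ref{eq:disintegration2} demands measurability of $z\mapsto\int f\,\mathrm{d}\mu_z$ for every $f\in M_+$. This is routine but must be done carefully, first for simple $f$ by linearity and then by monotone convergence for general nonnegative $f$, invoking the fact that a pointwise monotone limit of measurable functions is measurable. Everything else is bookkeeping around the $\mu\pi^{-1}$-null set of exceptional $z$'s, on which the $\mu_z$ may be redefined arbitrarily (e.g.\ as a Dirac mass at an arbitrary point) without affecting the conclusions.
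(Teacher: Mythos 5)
Your proposal is correct and takes essentially the same route as the paper: the paper's entire proof of this lemma is the citation ``See Theorems 1 and 2 in \citet{ChangPollard97},'' and your argument is simply the standard unpacking of that citation (existence of a regular conditional distribution for $\pi$ on a Polish space, concentration on fibers via a countable base, and the monotone-class/monotone-convergence bookkeeping for conditions (ii) and (iii)). Nothing further is needed.
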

\begin{proof}
	See Theorems 1 and 2 in \citet{ChangPollard97}.
\end{proof}

\subsection{Proofs of Propositions \ref{thm:pkSSD} and \ref{thm:adequate}}\label{sec:proofs}

\begin{lemma}\label{lem:LowerBound}
	If $\mu$ and $\nu$ are mutually absolutely continuous with Radon-Nikodym derivative $\pi=\mathrm{d}\nu/\mathrm{d}\mathrm{\mu}$, then every $\theta\in\Theta$ satisfying $\theta(X)\gtrsim_2X$ also satisfies
	\begin{align*}
		\int\theta\mathrm{d}\nu&\geq\int_0^1Q(u)Q_{\pi}(1-u)\mathrm{d}u.
	\end{align*}
\end{lemma}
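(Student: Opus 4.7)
The plan is to combine the Hardy-Littlewood rearrangement inequality (Lemma A.1.5) with the Hardy quantile majorization inequality (Lemma A.1.4), using the quantile characterization of second-order stochastic dominance (Lemma A.1.3) as the bridge between them.

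First, I would rewrite the price as $\int\theta\,\mathrm{d}\nu=\int\theta(x)\pi(x)\,\mathrm{d}\mu(x)$, so the quantity of interest is the integral of the product of the two nonnegative Borel functions $\theta$ and $\pi$ on the probability space $(\mathbb{R}_+,\mu)$. Applying the lower bound half of the Hardy-Littlewood rearrangement inequality to these two functions gives
\begin{equation*}
\int\theta\,\mathrm{d}\nu\;\geq\;\int_0^1 Q_\theta(u)\,Q_\pi(1-u)\,\mathrm{d}u,
\end{equation*}
where $Q_\theta$ denotes the quantile function of $\theta$ viewed as a random variable on $(\mathbb{R}_+,\mu)$, which coincides with the quantile function of $\theta(X)$.

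Next, I would invoke the hypothesis $\theta(X)\gtrsim_2 X$. By the equivalence of conditions \ref{SSDu} and \ref{SSDq} in Lemma \ref{lem:SSD}, this is equivalent to the quantile majorization
\begin{equation*}
\int_0^v Q_\theta(u)\,\mathrm{d}u\;\geq\;\int_0^v Q(u)\,\mathrm{d}u \quad\text{for all }v\in[0,1].
\end{equation*}
Now I would apply the Hardy quantile majorization inequality (Lemma \ref{lem:Hardy}) with $g(u)=Q_\pi(1-u)$, noting that $g$ is nonincreasing on $[0,1]$ because $Q_\pi$ is nondecreasing. Evaluating the resulting inequality at $v=1$ yields
\begin{equation*}
\int_0^1 Q_\theta(u)\,Q_\pi(1-u)\,\mathrm{d}u\;\geq\;\int_0^1 Q(u)\,Q_\pi(1-u)\,\mathrm{d}u,
\end{equation*}
and chaining this with the Hardy-Littlewood bound delivers the claimed inequality.

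There is no real obstacle here once one notices the correct pairing of lemmas; the only subtlety is matching the direction of each inequality. In particular, one must use the \emph{lower} Hardy-Littlewood bound (countermonotonic pairing) because $\pi$ enters with its decreasing rearrangement $Q_\pi(1-\cdot)$, and one must verify that this rearrangement is nonincreasing so that Hardy's inequality may be applied in the direction that preserves the dominance $\theta(X)\gtrsim_2 X$. Nonnegativity of $Q$ and $Q_\pi$ sidesteps any integrability concerns, since the monotone convergence version of Lemma \ref{lem:Hardy} is available.
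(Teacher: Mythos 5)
Your proof is correct and follows essentially the same route as the paper's: rewrite the price as $\int\theta\pi\,\mathrm{d}\mu$, apply the lower Hardy--Littlewood bound, convert $\theta(X)\gtrsim_2 X$ into quantile majorization via Lemma \ref{lem:SSD}, and finish with the Hardy quantile majorization inequality using the nonincreasing weight $Q_\pi(1-\cdot)$. No gaps.
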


\begin{proof}
	The Hardy-Littlewood rearrangement inequality (Lemma \ref{lem:HardyLittlewood}) implies that, for any $\theta\in\Theta$,
	\begin{align}\label{eq:price1}
		\int\theta\mathrm{d}\nu&=\int\theta\pi\mathrm{d}\mu\geq\int_0^1Q_{\theta}(u)Q_{\pi}(1-u)\mathrm{d}u.
	\end{align}
	For any $\theta\in\Theta$ satisfying $\theta(X)\gtrsim_2X$ we have
	\begin{align*}
		\int_0^vQ_{\theta}(u)\mathrm{d}u\geq\int_0^vQ(u)\mathrm{d}u
	\end{align*}
	for all $v\in[0,1]$, by Lemma \ref{lem:SSD}. It therefore follows from the Hardy quantile majorization inequality (Lemma \ref{lem:Hardy}) that
	\begin{align}\label{eq:price2}
		&\int_0^1Q_{\theta}(u)Q_{\pi}(1-u)\mathrm{d}u\geq\int_0^1Q(u)Q_{\pi}(1-u)\mathrm{d}u
	\end{align}
	for all $\theta\in\Theta$ satisfying $\theta(X)\gtrsim_2X$. Combining \eqref{eq:price1} and \eqref{eq:price2} establishes the claimed inequality.
\end{proof}

\begin{proof}[Proof of Proposition \ref{thm:pkSSD}]
	We will show that \ref{pkSSD1} implies \ref{pkSSD2}, that \ref{pkSSD2} implies \ref{pkSSD3}, and that \ref{pkSSD3} implies \ref{pkSSD1}. The fact that \ref{pkSSD1} implies \ref{pkSSD2} is immediate from the fact that the relation $\gtrsim_2$ is implied by the relation $\gtrsim_\mathrm{cv}$. To show that \ref{pkSSD2} implies \ref{pkSSD3}, suppose that \ref{pkSSD3} is false. Then $x\mapsto x$ and $x\mapsto\pi(x)$ are countermonotonic, so that the lower bound in the Hardy-Littlewood rearrangement inequality (Lemma \ref{lem:HardyLittlewood}) holds with equality:
	\begin{align*}
		\int_0^1Q(u)Q_{\pi}(1-u)\mathrm{d}u&=\int x\pi(x)\mathrm{d}\mu(x)=\int x\mathrm{d}\nu(x).
	\end{align*}
	It therefore follows from Lemma \ref{lem:LowerBound} that \ref{pkSSD2} must be false. Thus \ref{pkSSD2} implies \ref{pkSSD3}.
	
	To show that \ref{pkSSD3} implies \ref{pkSSD1}, suppose that \ref{pkSSD3} is true. In this case we may choose Borel sets $A,B$ with $\mu(A),\mu(B)>0$ such that
	\begin{align*}
		\sup A<\inf B\,\,\,\text{and}\,\,\,\esssup_{x\in A}\pi(x)<\essinf_{x\in B}\pi(x).
	\end{align*}
	Given $\epsilon\in(0,\mu(B)\inf B)$, define a function $\theta_\epsilon\in\Theta$ by
	\begin{align*}
		\theta_\epsilon(x)&=\begin{cases}x+\frac{\epsilon}{\mu(A)}&\text{for }x\in A\\
			x-\frac{\epsilon}{\mu(B)}&\text{for }x\in B\\
			x&\text{otherwise.}\end{cases}
	\end{align*}
	Observe that
	\begin{align*}
		\int\theta_\epsilon\mathrm{d}\nu&=\int x\mathrm{d}\nu(x)+\frac{\epsilon}{\mu(A)}\nu(A)-\frac{\epsilon}{\mu(B)}\nu(B)\\
		&\leq\int x\mathrm{d}\nu(x)+\epsilon\esssup_{x\in A}\pi(x)-\epsilon\essinf_{x\in B}\pi(x)<\int x\mathrm{d}\nu(x).
	\end{align*}
	Thus \ref{pkSSD1} will be shown true if we can establish that $\theta_\epsilon(X)\gtrsim_\mathrm{cv}X$ for sufficiently small $\epsilon>0$. We may accomplish this by showing that, for any $\epsilon$ small enough that
	\begin{align}\label{eq:smalleps}
		\sup A+\frac{\epsilon}{\mu(A)}\leq\inf B-\frac{\epsilon}{\mu(B)},
	\end{align}
	we have $\int u(\theta_\epsilon(x))\mathrm{d}\mu(x)\geq\int u(x)\mathrm{d}\mu(x)$ for all concave maps $u:\mathbb R_+\to\mathbb R$ such that the integrals exist. Any such $u$ has a nonincreasing right-derivative $u'$ on $(0,\infty)$ and satisfies
	\begin{align*}
		u\left(x+\frac{\epsilon}{\mu(A)}\right)&\geq u(x)+\frac{\epsilon}{\mu(A)}u'\left(x+\frac{\epsilon}{\mu(A)}\right)
	\end{align*}
	for all $x\in A$, and
	\begin{align*}
		u\left(x-\frac{\epsilon}{\mu(B)}\right)&\geq u(x)-\frac{\epsilon}{\mu(B)}u'\left(x-\frac{\epsilon}{\mu(B)}\right)
	\end{align*}
	for all $x\in B$. Therefore,
	\begin{align*}
		\int u(\theta_\epsilon(x))\mathrm{d}\mu(x)&\geq\int u(x)\mathrm{d}\mu(x)+\frac{\epsilon}{\mu(A)}\int_A u'\left(x+\frac{\epsilon}{\mu(A)}\right)\mathrm{d}\mu(x)\\
		&\quad-\frac{\epsilon}{\mu(B)}\int_B u'\left(x-\frac{\epsilon}{\mu(B)}\right)\mathrm{d}\mu(x)\\
		&\geq\int u(x)\mathrm{d}\mu(x)+\epsilon u'\left(\sup A+\frac{\epsilon}{\mu(A)}\right)-\epsilon u'\left(\inf B-\frac{\epsilon}{\mu(B)}\right),
	\end{align*}
	using the fact that $u'$ is nonincreasing to obtain the second inequality. Relying again on the fact that $u'$ is nonincreasing, we deduce that $\int u(\theta_\epsilon(x))\mathrm{d}\mu(x)\geq\int u(x)\mathrm{d}\mu(x)$ if $\epsilon$ is chosen small enough to satisfy \eqref{eq:smalleps}. Thus \ref{pkSSD3} implies \ref{pkSSD1}.
\end{proof}

\begin{proof}[Proof of Proposition \ref{thm:adequate}]
	Proposition \ref{thm:pkSSD} establishes the equivalence of \ref{adequate3}, \ref{adequate4} and \ref{adequate5}, so it suffices for us to show that $\vartheta$ attains the infimum in \ref{adequate4} while also satisfying $\vartheta(X)\sim X$. We first show that $\vartheta(X)\sim X$, considering separately the cases where $\mu$ is atomless and where $\mu$ is adequate but not atomless. Suppose first that $\mu$ is atomless. In this case $F(Q(u))=u$ for all $u$ by Lemma \ref{lem:inverse} and so, for any $x,y\in\mathbb R_+$, if $\vartheta(x)\leq y$ then
	\begin{align}\label{eq:FQ}
		1-F_{\pi}(\pi(x))+\mu\{w:\pi(w)=\pi(x),w\leq x\}&\leq F(y).
	\end{align}
	Conversely, if \eqref{eq:FQ} is true then, using the fact that $Q(F(y))\leq y$ (Lemma \ref{lem:inverse}), we must have $\vartheta(x)\leq y$. We deduce that, for any $y\in\mathbb R_+$,
	\begin{align*}
		A_y&=\Big\{x:\mu\{w:\pi(w)=\pi(x),w\leq x\}\leq F(y)+F_{\pi}(\pi(x))-1\Big\},
	\end{align*}
	where we define $A_y=\{x:\vartheta(x)\leq y\}$. Our goal is to show that $\mu A_y=F(y)$.
	
	Applying the disintegration theorem (Lemma \ref{thm:disintegration}), we disintegrate $\mu$ into a family of probability measures $\{\mu_z:z\in\mathbb R_+\}$ on $\mathbb R_+$ satisfying conditions \ref{eq:disintegration1}, \ref{eq:disintegration2} and \ref{eq:disintegration3} in Lemma \ref{thm:disintegration}, with $\Theta$ playing the role of $M_+$. Define
	\begin{align*}
		B&=\{z:\mu\{w:\pi(w)=z\}>0\},
	\end{align*}
	a countable set. For any $z$ such that $\mu_z$ has an atom $\{v\}$, condition \ref{eq:disintegration3} in Lemma \ref{thm:disintegration} implies that
	\begin{align*}
		\mu\{v\}&=\int\mu_{\pi(w)}\{v\}\mathrm{d}\mu(w)\geq\mu\{w:\pi(w)=z\}\mu_z\{v\}.
	\end{align*}
	The final quantity is positive if $z\in B$, contradicting our supposition that $\mu$ is atomless. We deduce that, in the case where $\mu$ is atomless, we also have $\mu_z$ atomless for each $z\in B$.
	
	Condition \ref{eq:disintegration1} in Lemma \ref{thm:disintegration} implies that
	\begin{align}\label{eq:muzAy}
		\mu_zA_y&=\mu_z\Big\{x:\mu\{w:\pi(w)=z,w\leq x\}\leq F(y)+F_{\pi}(z)-1\Big\}
	\end{align}
	for $\mu\pi^{-1}$-almost all $z$. Moreover,
	\begin{align*}
		\mu\{w:\pi(w)=z,w\leq x\}&=\int\mu_{\pi(v)}\{w:\pi(w)=z,w\leq x\}\mathrm{d}\mu(v)\\
		&=\mu\{v:\pi(v)=z\}\mu_{z}\{w:w\leq x\},
	\end{align*}
	where the first equality follows from condition \ref{eq:disintegration3} in Lemma \ref{thm:disintegration} and the second from condition \ref{eq:disintegration1}. We may therefore rewrite \eqref{eq:muzAy} as
	\begin{align}\label{eq:muzAy2}
		\mu_zA_y&=\mu_z\left\{x:\mu\{w:\pi(w)=z\}\mu_{z}\{w:w\leq x\}\leq F(y)+F_{\pi}(z)-1\right\}.
	\end{align}
	Condition \ref{eq:disintegration3} in Lemma \ref{thm:disintegration} implies that
	\begin{align}\label{eq:muzAy3}
		\mu A_y&=\int\mu_zA_y\mathrm{d}\mu\pi^{-1}(z).
	\end{align}
	We will use \eqref{eq:muzAy2} and \eqref{eq:muzAy3} to show that $\mu A_y=F(y)$. In pursuit of this goal, it will be useful to separate the domain of integration in \eqref{eq:muzAy3} into the regions where $z\in B$ and where $z\notin B$.

	We first examine the integral over the region where $z\in B$. It follows from \eqref{eq:muzAy2} that, for each $z\in B$,
	\begin{align*}
		\mu_zA_y&=\mu_z\left\{x:\mu_{z}\{w:w\leq x\}\leq \frac{F(y)+F_{\pi}(z)-1}{\mu\{w:\pi(w)=z\}}\right\}.
	\end{align*}
	Since $\mu_z$ is atomless for each $z\in B$, the distribution function $x\mapsto\mu_z\{w:w\leq x\}$ corresponding to the probability measure $\mu_z$ is continuous for each $z\in B$. Lemma \ref{lem:PIT} on the uniformity of probability integral transforms thus ensures that
	\begin{align}
		\mu_zA_y&=0\vee\left(1\wedge\frac{F(y)+F_{\pi}(z)-1}{\mu\{w:\pi(w)=z\}}\right)\label{eq:maxmin}
	\end{align}
	for each $z\in B$. Noting that $F_{\pi}$ has a jump of size $\mu\{w:\pi(w)=z\}$ at each $z\in B$, we see that the fraction in \eqref{eq:maxmin} is equal to or less than zero if and only if $F_{\pi}(z)\leq1-F(y)$, and is greater than one if and only if $F_{\pi}(z-)>1-F(y)$.
	Let $B_y$ denote the set of all $z\in B$ such that $F_{\pi}(z-)\leq1-F(y)<F_{\pi}(z)$,	either empty or a singleton. Integrating \eqref{eq:maxmin} over $z\in B$ with respect to $\mu\pi^{-1}$ yields
	\begin{align}
		\int_{z\in B}\mu_zA_y\mathrm{d}\mu\pi^{-1}(z)&=\mu\pi^{-1}\{z\in B:F_{\pi}(z-)>1-F(y)\}\notag\\
		&\quad+\sum_{z\in B_y}(F(y)+F_{\pi}(z)-1).\label{eq:intB}
	\end{align}
	
	We next examine the integral over the region where $z\notin B$. For $\mu\pi^{-1}$-almost all $z\notin B$, we deduce from \eqref{eq:muzAy2} that
	\begin{align*}
		\mu_zA_y&=\begin{cases}0&\text{if }F_{\pi}(z)<1-F(y)\\
			1&\text{if }F_{\pi}(z)\geq1-F(y).\end{cases}
	\end{align*}
	Integrating over $z\notin B$ with respect to $\mu\pi^{-1}$, we obtain
	\begin{align*}
		\int_{z\notin B}\mu_{z}A_y\mathrm{d}\mu\pi^{-1}(z)&=\mu\pi^{-1}\{z\notin B:F_{\pi}(z)\geq1-F(y)\}.
	\end{align*}
	The set $\{z\notin B:F_{\pi}(z)=1-F(y)\}$ is $\mu\pi^{-1}$-null due to the fact that $F_{\pi}$ has no jumps in this set and takes the constant value $1-F(y)$. Since $F_{\pi}(z-)=F_{\pi}(z)$ at every $z\notin B$, it follows that
	\begin{align}
		\int_{z\notin B}\mu_{z}A_y\mathrm{d}\mu\pi^{-1}(z)&=\mu\pi^{-1}\{z\notin B:F_{\pi}(z-)>1-F(y)\}.\label{eq:intnotB}
	\end{align}

	Combining \eqref{eq:muzAy3}, \eqref{eq:intB} and \eqref{eq:intnotB}, we obtain
	\begin{align*}
		\mu A_y&=\mu\pi^{-1}\{z:F_{\pi}(z-)>1-F(y)\}+\sum_{z\in B_y}(F(y)+F_{\pi}(z)-1).
	\end{align*}
	If $B_y$ is empty then $1-F(y)$ belongs to the range of $F_{\pi}$, and so Lemma \ref{lem:PIT} implies that
	\begin{align*}
		\mu\pi^{-1}\{z:F_{\pi}(z)>1-F(y)\}&=F(y).
	\end{align*}
	Emptiness of $B_y$ also implies that $F_{\pi}(z)>1-F(y)$ if and only if $F_{\pi}(z-)>1-F(y)$, so we conclude that $\mu A_y=F(y)$ if $B_y$ is empty. On the other hand, if $B_y$ is a singleton, say $B_y=\{b\}$, then $F_{\pi}$ jumps from at most $1-F(y)$ to above $1-F(y)$ at $b$, so that
	\begin{align*}
		\mu\pi^{-1}\{z:F_{\pi}(z-)>1-F(y)\}&=1-F_{\pi}(b).
	\end{align*}
	Thus we also have $\mu A_y=F(y)$ if $B_y$ is a singleton. As noted above, $B_y$ must be either empty or a singleton, so we have $\mu A_y=F(y)$ for all $y$. This establishes that $\vartheta(X)\sim X$ in the case where $\mu$ is atomless.
	
	We next show that $\vartheta(X)\sim X$ in the case where $\mu$ is adequate but not atomless, meaning that $\mu$ allocates mass $n^{-1}$ to each of $n$ points $x_1<\cdots<x_n$. Let $m_1,\dots,m_n$ be the unique permutation of $1,\dots,n$ such that (i) $\pi(x_{m_i})\geq\pi(x_{m_j})$ whenever $i\leq j$, and (ii) $m_i<m_j$ whenever $\pi(x_{m_i})=\pi(x_{m_j})$ and $i>j$. For any $i$, we have $Q(i/n)=x_i$,
	\begin{align*}
		1-F_{\pi}(\pi(x_i))=\frac{1}{n}\sum_{j=1}^n\mathbbm{1}(\pi(x_j)>\pi(x_i))&=\frac{m_i}{n}-\frac{1}{n}\sum_{j=1}^{i}\mathbbm{1}(\pi(x_j)=\pi(x_i)),\quad\text{and}\\
		\mu\{w:\pi(w)=\pi(x_i),w\leq x_i\}&=\frac{1}{n}\sum_{j=1}^{i}\mathbbm{1}(\pi(x_j)=\pi(x_i)).
	\end{align*}
	Therefore, $\vartheta(x_i)=x_{m_i}$, meaning that $\vartheta$ is a permutation of the mass points of $\mu$. This establishes that $\vartheta(X)\sim X$ in the case where $\mu$ is adequate but not atomless.
	
	It remains to show that $\vartheta$ minimizes $\int\theta\mathrm{d}\nu$ over the set of all $\theta\in\Theta$ satisfying $\theta(X)\gtrsim_2X$. Lemma \ref{lem:LowerBound} establishes that all such $\theta$ satisfy
	\begin{align}\label{eq:priceLB}
		\int\theta\mathrm{d}\nu&\geq\int_0^1Q(u)Q_{\pi}(1-u)\mathrm{d}u.
	\end{align}
	We have shown that $\vartheta(X)\sim X$, which implies that $\vartheta(X)\gtrsim_2X$. We are therefore done if we can show that $\vartheta$ attains the lower bound in \eqref{eq:priceLB}.
	
	The Hardy-Littlewood rearrangement inequality (Lemma \ref{lem:HardyLittlewood}) implies that
	\begin{align}\label{eq:priceHL}
		\int\vartheta\mathrm{d}\nu&=\int\vartheta\pi\mathrm{d}\mu\geq\int_0^1Q_\vartheta(u)Q_\pi(1-u)\mathrm{d}u.
	\end{align}
	The inequality holds with equality if $\vartheta$ and $\pi$ are countermonotonic, which we now demonstrate. Pick $x,x'\in\mathbb R_+$ such that $\pi(x)>\pi(x')$. Since $F_{\pi}$ is nondecreasing and has a jump of size $\mu\{w:\pi(w)=\pi(x)\}$ at $\pi(x)$, we have
	\begin{align*}
		F_{\pi}(\pi(x))&\geq F_{\pi}(\pi(x'))+\mu\{w:\pi(w)=\pi(x)\},
	\end{align*}
	and consequently
	\begin{align*}
		F_{\pi}(\pi(x))-\mu\{w:\pi(w)=\pi(x),w\leq x\}&\geq F_{\pi}(\pi(x'))\\&\quad-\mu\{w:\pi(w)=\pi(x'),w\leq x'\}.
	\end{align*}
	It thus follows from the nondecreasing property of $Q$ that $\vartheta(x)\leq\vartheta(x')$. We conclude that $\vartheta$ and $\pi$ are countermonotonic. The inequality in \eqref{eq:priceHL} therefore holds with equality. Moreover, since $\vartheta(X)\sim X$, we have $Q_\vartheta=Q$. This shows that $\vartheta$ attains the lower bound in \eqref{eq:priceLB}.
\end{proof}

\section*{References}

\end{document}